\documentclass[11pt]{article}
\usepackage[english]{babel}
\usepackage{graphicx} % Required for inserting images
\usepackage{algorithm}
\usepackage[noend]{algpseudocode}
\usepackage{amsmath}
\usepackage{amsthm}
\usepackage{amssymb}
\usepackage{mathrsfs}
\usepackage{mathalfa}
\usepackage[letterpaper, portrait, margin=1in]{geometry}
\usepackage{marginnote}
\usepackage{thmtools}
\usepackage{thm-restate}
\usepackage{hyperref}
\usepackage{cleveref}
\usepackage{relsize}
\usepackage{bbm}
\usepackage{multirow}
\usepackage{float}
\usepackage{xargs}    
\usepackage[pdftex,dvipsnames]{xcolor}  % Coloured text etc.
\usepackage{relsize}

\makeatletter
\newcommand{\AlignFootnote}[1]{%
	\ifmeasuring@
	\else
	\iffirstchoice@
	\footnote{#1}%
	\fi
	\fi}
\makeatother

\newtheorem{theorem}{Theorem}[section]

\newtheorem{lemma}[theorem]{Lemma}
\newtheorem{remark}[theorem]{Remark}
\newtheorem{claim}[theorem]{Claim}
\newtheorem{proposition}[theorem]{Proposition}
\newtheorem{fact}[theorem]{Fact}
\newtheorem{definition}[theorem]{Definition}

\newcommand{\E}{\mathop{\mathbb{E}}}

\newcommand{\C}{\mathbb{C}}
\newcommand{\GL}{\mathsf{GL}}

\newcommand{\V}[1]{\boldsymbol{#1}}
\newcommand{\trivrep}{\{\mathbf{1}\}}
\newcommand{\gp}{\text{\relsize{-3} $\bullet$}\hspace{2pt}} %groupproduct
\newcommand{\termm}[3]{\mathbf{Term}^{#1}(#2, #3)}

\newcommand{\irr}{\mathsf{Irrep}}
\newcommand{\trace}{\mathsf{tr}}
\newcommand{\comp}{\overline}
\newcommand{\End}[1]{\mathsf{End}\hspace{1pt} #1}
\newcommand{\conv}{*}
\newcommand{\ctranspose}[1]{{#1}^\star}
\newcommand{\hsnorm}[1]{\left\| #1 \right\|_{\mathsf{HS}}}
\newcommand{\opnorm}[1]{\left\| #1 \right\|_{\mathsf{op}}}
\newcommand{\pnorm}[2]{\| #1 \|_{#2}}
\newcommand{\dimgeqi}[2]{\dim_{\geq #2}({#1})}

\newcommand{\commutator}[2]{[#1,#2]}
\newcommand{\N}{\mathbb{N}}

\newcommand{\LC}{{Label Cover}}
\newcommand{\TSAT}{3\text{-}SAT}
\newcommand{\LLC}{{Layered Label Cover}}

\newcommand{\low}{\mathsf{low}}
\newcommand{\high}{\mathsf{high}}

\renewcommand{\geq}{\geqslant}
\renewcommand{\leq}{\leqslant}

\newcommand{\eps}{\varepsilon}

\newcommand{\np}{\mathbf{NP}}
\newcommand{\p}{\mathbf{P}}

\renewcommand{\inf}{\mathrm{Inf}}

\title{Optimal Inapproximability of Generalized Linear Equations over a Finite Group}

\newcommand{\anonymoysflag}{0} % change to 0 to switch behavior

\ifnum\anonymoysflag=1
    \author{Anonymous}
\else
    \author{Amey Bhangale\thanks{Department of Computer Science and Engineering, University of California, Riverside. Supported by the Hellman Fellowship award and NSF CAREER award 2440882.}
    \and
    Yezhou Zhang\thanks{Department of Computer Science and Engineering, University of California, Riverside. Supported by the Hellman Fellowship award and NSF CAREER award 2440882.}
    }
\fi

\date{}

\begin{document}
	\allowdisplaybreaks
	\maketitle

\begin{abstract}
    Constraint satisfaction problems (CSPs) consist of a set of variables taking values from some finite domain and a set of local constraints on these variables. The objective is to find an assignment to the variables that maximizes the fraction of satisfied constraints.
    
    In this work, we study the CSP where the constraints are {\em generalized linear equations} over a finite group $G$. More specifically, for a given $S\subseteq G$, the constraints in this CSP are of the form addition of the values to the variables (similarly, product for non-abelian groups) belongs to the set $S$. We give an approximation algorithm for this problem on satisfiable instances and show that it is optimal for certain $S$ assuming $\mathbf{P}\neq \mathbf{NP}$. %To show this optimal hardness result, we design a novel decoding procedure in the soundness analysis of the reduction, which may be of independent interest towards understanding the optimal approximability of satisfiable CSPs.
    
    This natural predicate is one of the very few known predicates that are approximation resistant on almost satisfiable instances, assuming $\mathbf{P}\neq \mathbf{NP}$, but admits a non-trivial approximation algorithm on satisfiable instances. 
\end{abstract}	

\section{Introduction}

Constraint satisfaction problems  (CSPs) are one of the most fundamental problems in theoretical computer science. A Max-$P$-CSP instance $\varphi$ for a given predicate $P: \Sigma^k \rightarrow \{0,1\}$ consists of a set of variables $x_1, x_2, \ldots, x_n$ taking values from the domain $\Sigma$ and a collection of constraints $C_1, C_2, \ldots, C_m$ where each $C_i$ consists of a constraint of the form $P(x_{i_1}, x_{i_2}, \ldots, x_{i_k})$. The constraints might involve {\em literals} instead of just the variables. The objective is to assign values to the variables that maximize the fraction of satisfied constraints from $\varphi$. An $\alpha$-approximation for Max-$P$-CSP is an algorithm that, given an instance $\varphi$ of Max-$P$-CSP, outputs an assignment that satisfies at least $\alpha\cdot \mathrm{OPT}$, where $\mathrm{OPT}$ is the optimal value of the instance $\varphi$.

A systematic study of the complexity of  CSPs was started by Schaefer in 1978~\cite{Schaefer78}, who showed that for every predicate $P$ over a $2$-element set, the problem $P$-CSP is either solvable in polynomial time or is $\np$-complete.  A famous dichotomy conjecture of Feder and Vardi~\cite{FederV1,FederV2}, which was resolved recently in huge breakthroughs by Bulatov~\cite{Bulatov17} and Zhuk~\cite{Zhuk20} independently, states that for every predicate $P$, checking the satisfiability of a $P$-CSP is either in $\p$ or is $\np$-complete.  However, when it comes to approximation algorithms for Max-$P$-CSP, the question is wide open.

The optimal approximability results for Max-$P$-CSP for various predicates $P$ are known starting with the seminal work of~\cite{Hastad2001}. H{\aa}stad showed that Max-3SAT, where the predicate is over a Boolean alphabet and is an OR of three literals, is $\np$-hard to approximate within a factor of $\frac{7}{8}+\eps$ for every $\eps>0$. The above hardness result also holds even if the given instance is guaranteed to have an assignment with value $1$. It is also easy to get a $\frac{7}{8}$-approximation algorithm for Max-3SAT -- for each variable, pick a random value from $\{0,1\}$. This algorithm satisfies $\frac{7}{8}$-fraction of the constraints in expectation. It is also easy to derandomize this algorithm by the method of conditional expectation. Thus, the hardness result is optimal. %Such optimal hardness results are known for various predicates, including Max Set Splitting~\cite{Hastad2001}, Max-NTW~\cite{Hastad14}, a predicate that contains a subgroup that is balanced pairwise independent~\cite{Chan16}, etc.

In the same work on the optimal hardness of Max-3SAT, H{\aa}stad~\cite{Hastad2001} also studied the inapproximability of Max-3LIN over an abelian group $G$. In Max-3LIN over a group $G$, the variables take values from the group $(G, +)$, and constraints are of the type $x_{i_1} + x_{i_2} + x_{i_3} = c$ for some $c\in G$. If the given instance of Max-3LIN is fully satisfiable, then it is known~\cite{RG99} that the satisfying assignment can be found in polynomial time using techniques similar to Gaussian elimination. Furthermore, a random assignment to the variables satisfies $\frac{1}{|G|}$ fraction of the constraints in expectation. H{\aa}stad~\cite{Hastad2001} showed that this algorithm is optimal for general instances. In other words, he showed that for every $\eps>0$, if an instance of Max-3LIN is given with value at least $1-\eps$, then it is $\np$-hard to find an assignment that satisfies at least $\frac{1}{|G|}+\eps$ fraction of the constraints. 

The Max-3LIN over a non-abelian group $(G, \gp)$ is another interesting problem. Unlike the abelian group case, here, Goldmann and Russell showed~\cite{RG99} that it is $\np$-complete to check the satisfiability of a given instance for every non-abelian group $G$. Engebretsen, Holmerin, and Russell gave~\cite{EngebretsenHR04} similar inapproximability results as in the case of abelian groups. More specifically, they showed that for every $\eps>0$, if an instance of Max-3LIN over a non-abelian group $(G, \gp)$ is given with value at least $1-\eps$, then it is $\np$-hard to find an assignment that satisfies at least $\frac{1}{|G|}+\eps$ fraction of the constraints. There is, however, a better than $\frac{1}{|G|}$ approximation algorithm for certain groups on satisfiable instances. It is folklore to get a $\frac{1}{|\commutator{G}{G}|}$-approximation algorithm for Max-3LIN over a non-abelian group $(G, \gp)$ on satisfiable instances, where $\commutator{G}{G}$ is the commutator subgroup of $G$. A commutator of two group elements $g$ and $h$ is a group element $g^{-1}\gp h^{-1}\gp g\gp h$, and a commutator subgroup $\commutator{G}{G}$ is the subgroup generated by all the commutators of the group $G$. Recently, Bhangale and Khot~\cite{BhangaleK21} showed that this algorithm is optimal on satisfiable instances assuming $\p \neq \np$.\\

Given these approximation algorithms and inapproximability results concerning Max-3LIN, it is natural to ask what role the abelian nature of a group $G$ plays in such results. In this work, we study a generalized version of the Max-3LIN problem, denoted by $\text{Max-E$k$-LIN}_{S}(G)$, as defined below.\\

	Fix a group $(G, \gp)$ and a subset $S\subseteq G$. In the instance of $\text{Max-E$k$-LIN}_{S}(G)$, the variables take values from $G$. A {\em literal} of a variable $x$ is given by $(g\gp x)$ for some $g\in G$.
	\begin{definition}[$\text{Max-E$k$-LIN}_{S}(G)$]
		Consider a multiset of linear equations $(C_1,C_2,\dots, C_m)$ over variables $\{x_1,x_2,\dots,x_n|x_i\in G\}$ for a group $G$. Each constraint consists of a tuple of exactly $k$ literals and will be considered satisfied when their sum/product over $G$ is in $S\subseteq G$. The $\text{Max-E$k$-LIN}_{S}(G)$ problem is to find an assignment to $\{x_1,x_2,\dots,x_n\}$ that maximizes the number of satisfied constraints.
	\end{definition}

Observe that the problem Max-3LIN over a group $G$ that was discussed earlier is  $\text{Max-E$k$-LIN}_{\{1_G\}}(G)$ where $1_{G}$ is the identity element of the group $G$. In $\text{Max-E$k$-LIN}_{S}(G)$, we allow more satisfying assignments in the predicate, i.e, ignoring the literals for simplicity, the condition that a tuple $(a, b, c)$ satisfies a given constraint depends on the value of $a\gp b\gp c$, and hence we call the predicate a {\em generalized linearity predicate} over $G$.\footnote{This nomenclature was also used by Chattopadhyay and Wigderson~\cite{ChattopadhyayW09} to describe similar predicates.}

\paragraph{Approximation Algorithm for $\text{Max-E$3$-LIN}_{S}(G)$.} To keep things simple, we restrict to the setting when $G$ is an abelian group for this discussion. We start with a simple approximation algorithm for $\text{Max-E$3$-LIN}_{S}(G)$. Let $\Phi$ be an instance of $\text{Max-E$3$-LIN}_{S}(G)$ with constraints set $(C_1, C_2, \ldots, C_m)$ over the variables $X = \{ x_1, x_2, \ldots, x_n\}$. Here, the constraint $C_i$ is of the form
$$(a_{i_1}+x_{i_1}) + (a_{i_2}+ x_{i_2}) + (a_{i_3}+x_{i_3}) \in S.$$
One can try to replace checking this condition by a linear equation over a certain abelian group, hoping to solve this system using Gaussian elimination, and from this solution get a non-trivial solution to the original instance. One natural way to do this is to consider the group $(Q,+):= G/\mathrm{cl}(S)$ where $\mathrm{cl}(S)$ is the subgroup generated by $S$. The above condition implies that

\begin{equation}
    \label{eq:abelian_eq_rhs_1}
([a_{i_1}]_Q+ [x_{i_1}]_Q) + ([a_{i_2}]_Q+ [x_{i_2}]_Q) + ([a_{i_3}]_Q+ [x_{i_3}]_Q) = 0_Q,
\end{equation}%
where $[z]_Q$ for $z\in G$ denotes the element of $Q$, which are the cosets of $\mathrm{cl}(S)$, where the respective coset contains $z$, and $0_Q$ is the identity element of $Q$. Thus, we convert the set of constraints in $\Phi$ to a system of equations, denoted by $\tilde{\Phi}$, over the group $(Q,+):= G/\mathrm{cl}(S)$ with variables $Y = \{ y_1, y_2, \ldots, y_n\}$. More specifically, for a constraint $C_i$ which is of the form  $(a_{i_1}+x_{i_1}) + (a_{i_2}+ x_{i_2}) + (a_{i_3}+x_{i_3}) \in S$, we add the following equation over $Q$: $[a_{i_1}]_Q + y_{i_1} + [a_{i_2}]_Q + y_{i_2} + [a_{i_3}]_Q + y_{i_3} = 0_Q$.

Now, as $\Phi$ is satisfiable, consider the satisfying assignment $\V \alpha: X\rightarrow G$ to $\Phi$. Consider the assignment $\tilde{\V \alpha} : Y\rightarrow Q$ given by the natural map $\tilde{\V \alpha}(y_i) = [\V \alpha(x_i)]_Q$ . It is easy to see that $\tilde{\V \alpha}$ satisfies all the equations from the instance $\tilde{\Phi}$, and hence, $\tilde{\Phi}$ is satisfiable. Since $\tilde{\Phi}$ is a system of equations over an abelian group $(Q,+)$, we can find a satisfying assignment to $\tilde{\Phi}$ in polynomial time using Gaussian elimination~\cite{RG99}. Let $\tilde{\V \beta}$ be the assignment returned by this procedure. To construct the final assignment to the $X$ variables, we simply set $x_i$ to be a random element from the coset $\tilde{\V \beta}(y_i)$. Let $\V \beta: X\rightarrow G$ be the random assignment given by the above procedure. It can be easily observed that $\V \beta$ satisfies a given constraint $C_i$ in $\Phi$ with probability $\frac{|S|}{|\mathrm{cl}(S)|}$ and hence $\V \beta$ satisfies $\frac{|S|}{|\mathrm{cl}(S)|}$ fraction of the constraints in expectation. The randomized algorithm can be easily derandomized using the method of conditional expectations.

The above algorithm is indeed optimal for certain pairs $(S, G)$. However, consider a group $G = \mathbb{Z}_4 \times \mathbb{Z}_4$ and $S = \{(0,1), (1,0)\}$. In this case, $\mathrm{cl}(S) = G$ and hence the above algorithm gives a $|S|/|\mathrm{cl}(S)| = 1/8$ approximation algorithm. However, there is a $1/2$-approximation algorithm for this CSP as described next. 

One simple observation is that in Equation (\ref{eq:abelian_eq_rhs_1}), the RHS being the identity element of the group was not important for the efficient solvability of the system of equations over $Q$. Thus, one can work with a potentially smaller subgroup $H_S$ such that $S\subseteq g+H_S$ for some $g\in G$. In this case, letting $Q = G/H_S$, the equations in $\tilde{\Phi}$ are of the form $[a_{i_1}]_Q + y_{i_1} + [a_{i_2}]_Q + y_{i_2} + [a_{i_3}]_Q + y_{i_3} = g$, and the same rounding procedure as above gives a $|S|/|H_S|$-approximation guarantee.

For the earlier example of  $G = \mathbb{Z}_4 \times \mathbb{Z}_4$ and $S = \{(0,1), (1,0)\}$, one can take the subgroup $H_S = \{ (0,0), (1, 3), (3, 1), (2,2) \}$ where $S\subseteq (0,1)+H_S$, which gives a $\frac{1}{2}$-approximation.\\

Can this approximation guarantee be further improved? Our main result shows that the above approximation algorithm (and a generalization of this to non-abelian groups) is indeed optimal for certain $S$.
% \abnote{our proof needs $\langle S^{-1}S\rangle$ generating $H_S$ for now}
	
	\begin{theorem}
    \label{thm:main}
		Fix any finite group $G$ and $S\subseteq G$. Let $H_S$ is the smallest normal subgroup such that
		\begin{enumerate}
			\item $[G,G]\subseteq H_S$
			\item $S\subseteq gH_S$ for some $g\in G$, i.e., $S$ is a subset of some coset of $H_S$.
		\end{enumerate}
         then the problem $\text{Max-E$k$-LIN}_{S}(G)$ is approximable within a factor of $\frac{|S|}{|H_S|}$ on satisfiable instances. Furthermore, if $S^{-1}S$ generates $H_S$, assuming $\p\neq \np$, for every $\eps>0$, it is $\np$-hard to approximate $\text{Max-E$k$-LIN}_{S}(G)$ within a factor of $\frac{|S|}{|H_S|}+\eps$.
	\end{theorem}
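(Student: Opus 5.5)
The proof has two parts: an algorithm and a matching hardness reduction.

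\textbf{Algorithm.} Let $Q = G/H_S$. Since $[G,G]\subseteq H_S$, the group $Q$ is abelian. A constraint $(a_1\gp x_{i_1})\gp\cdots\gp(a_k\gp x_{i_k})\in S\subseteq gH_S$ therefore projects to the linear equation
\[
\sum_j [a_j]_Q + \sum_j y_{i_j} = [g]_Q
\]
over $Q$. If $\V\alpha$ satisfies the original instance, then $[\V\alpha]_Q$ satisfies the projected system. By~\cite{RG99}, a satisfying $\tilde{\V\beta}$ for the projected system can be found in polynomial time. We then set each $x_i$ to an independent uniform element of the coset $\tilde{\V\beta}(y_i)$. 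Fix all but the last literal. The product is then a uniform element of the coset $gH_S$, because $H_S$ is normal and left multiplication by a fixed element is a bijection. Hence each constraint holds with probability $|S|/|H_S|$. The method of conditional expectations derandomizes this.

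\textbf{Hardness.} I would follow the Bhangale--Khot framework~\cite{BhangaleK21} for non-abelian Max-3LIN on satisfiable instances. It reduces it suffices to handle $k=3$, since the case $k>3$ follows by padding with extra literals. The reduction is from Label Cover, with long codes $f:G^{R}\to G$ on the small side and $h:G^{L}\to G$ on the big side, all folded ($f(c\gp x)=c\gp f(x)$). The test is as follows. Pick $x$ uniform and $y$ uniform. Pick $z$ coordinatewise with $z_i \in (x_{\pi(i)}\gp y_i)^{-1} S$ chosen uniformly; more precisely, sample $s_i\in S$ uniformly and set $x\gp y\gp z = s$ pointwise. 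Then check $f(x)\gp h(y)\gp h'(z)\in S$. In the non-abelian case it may be necessary to use two big-side tables $h,h'$ and a Bhangale--Khot style distribution. Completeness is perfect because dictators satisfy every query.

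For soundness, write $\Pr[\text{accept}] = \frac{|S|}{|G|}\sum_{\rho}\dim\rho\cdot \trace(\ldots)$ using the Fourier expansion of $1_S$ over $\irr(G)$. There are two kinds of terms.
\begin{itemize}
\item[(i)] One-dimensional characters $\chi$ of $G$ that are trivial on $H_S$, i.e.\ characters of $G/H_S$. On $gH_S$ these are constant. Since $S^{-1}S$ generates $H_S$, exactly these characters are constant on $S$. Combined with the folding and $G/H_S$-structure, these terms sum to $|S|/|H_S|$; this is the algorithm's value, coming from linear functions over $Q$.
\item[(ii)] All remaining terms, namely characters nontrivial on $H_S$ and higher-dimensional irreps. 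For these, the noise in $z$ is not ``constant'': since $S^{-1}S$ generates $H_S$, the distribution of $s_i$ has $\E[\rho(s_i)]$ with operator norm strictly less than $1$ on the relevant components. Iterating (or using random restrictions to $H_S$-cosets) then gives a noise/attenuation effect, so large such terms force a low-degree or high-influence part of $f,h$. Decoding to Label Cover follows via the usual influence-based strategy.
\end{itemize}

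\textbf{Main obstacle.} The hardest step is (ii) for irreps $\rho$ whose restriction to $H_S$ contains the trivial representation only partially. For such $\rho$, $\E_{s\in S}\rho(s)$ has norm $1$ on some subspace, so the bound is not a plain norm contraction. I would first decompose $\rho|_{H_S}$ and use the folding over $G/H_S$ to cancel the trivial-on-$H_S$ components, as in the Bhangale--Khot handling of $[G,G]$. The surviving components then contract, because a non-trivial irrep of $H_S$ applied to a distribution whose difference set generates $H_S$ averages to something of norm $<1$. The perfect-completeness requirement also blocks the standard $(1-\eps)$-noise trick, so the decoding must use the inherent randomness of $s$ instead. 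Making this work, possibly via a new decoding based on $\dim_{\geq i}$-type degree bounds, is where I expect most of the effort.
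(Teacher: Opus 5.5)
\textbf{Verdict: there is a genuine gap in the soundness part (ii).} The algorithm is the paper's. The test (one table $A_u$ queried at both $\V y$ and $\V z$ suffices) and the bound of $|S|/|H_S|$ on the characters in $\widehat{G/H_S}$ also match the paper.

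\textbf{The gap: the noise damps coordinates, not outer representations.} The $\V s$-noise does not act on the outer representation $\rho$. It acts on the coordinates $\beta_i$ of the long-code Fourier expansion of $\chi_\rho\circ A_u$, via $\hat h^{\V s}_u(\beta)=\hat g_u(\beta)\beta(\V s)^\star$. Moreover, $|\E_{s\in S}\beta_i(s)|=1$ whenever $\beta_i\in\widehat{G/H_S}$, trivial or not. So even for $\rho$ nontrivial on $H_S$ the damping is $(1-\eps_G)^{|s(\beta)|}$, where $s(\beta)=\{i:\beta_i\notin\widehat{G/H_S}\}$. It is not governed by the usual weight $|\beta|$, and the ``usual influence-based strategy'' breaks.

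Here is a concrete failure. In the introduction's example $G=\mathbb{Z}_4\times\mathbb{Z}_4=H_S\oplus K$ with $K=\langle(0,1)\rangle$, take the folded table $A_u(\V y)=\mathrm{pr}_{H_S}(y_\sigma)+\mathrm{pr}_K(\sum_i c_iy_i)$ with $c_i=\pm1$ and $\sum_i c_i=1$. For $\chi_\rho$ nontrivial on both $H_S$ and $K$, the function $\chi_\rho\circ A_u$ is a single character $\beta$. Every $\beta_i$ is nontrivial, but $s(\beta)=\{\sigma\}$. Consequently:
\begin{itemize}
\item this term is not damped by the noise at all;
\item every usual low-degree influence is $0$;
\item a random nontrivial coordinate equals $\sigma$ only with probability $1/n$.
\end{itemize}
So the standard notion of ``far from dictator'' does not capture the terms that survive the noise, and a soundness proof built on it does not go through.

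\textbf{The missing idea: the paper's modified degree.}
\begin{itemize}
\item Split the sum at $|s(\beta)|<C$ versus $|s(\beta)|\ge C$.
\item Use folding to get $s(\alpha),s(\beta)\neq\emptyset$ (Lemma~\ref{lemma:folding}).
\item Observe that nonzero terms satisfy $s(\alpha)\subseteq\pi(s(\beta))$, because products of characters in $\widehat{G/H_S}$ stay in $\widehat{G/H_S}$.
\item Let each vertex output a random coordinate of $s(\beta)$, resp.\ $s(\alpha)$.
\item Note that $|s(\beta)|<C$ also forces $\dim\beta\le|G|^{C/2}$, which is what makes the Cauchy--Schwarz step affordable.
\end{itemize}

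\textbf{The obstacle you single out does not occur.} Let $\beta_i$ be irreducible with $\dim\beta_i\ge2$.
\begin{itemize}
\item Its subspace of $H_S$-fixed vectors is $G$-invariant, since $H_S\trianglelefteq G$. By irreducibility it is $0$ or everything.
\item ``Everything'' would make $\beta_i$ trivial on $[G,G]\subseteq H_S$, forcing $\dim\beta_i=1$. So it is $0$.
\item Suppose $\opnorm{\E_{s\in S}\beta_i(s)}=1$. A maximizing unit vector $v$ then has all $\beta_i(s)v$ equal, so it is fixed by $\langle S^{-1}S\rangle=H_S$, a contradiction.
\end{itemize}
Hence $\opnorm{\E_{s\in S}\beta_i(s)}\le 1-\delta_G$ outright (Claim~\ref{claim:highterms}). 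No decomposition of $\rho|_{H_S}$ and no extra folding over $G/H_S$ is needed. The one-dimensional $\beta_i\notin\widehat{G/H_S}$ are handled by minimality of $H_S$ (Claim~\ref{claim:main_S_noise_claim}). So every coordinate of $s(\beta)$ contracts, whatever its dimension.

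\textbf{Comparison with the paper's route.} Once the $s(\beta)$-degree is in place, your plan of damping \emph{all} remaining terms by the $\V s$-noise looks workable from ordinary Label Cover. This includes the outer $\rho$ with $\dim\rho\ge2$: their coefficients vanish on one-dimensional $\beta$ by Lemma~\ref{lemma:folded_zero_coeff}, so again $s(\beta)\neq\emptyset$. The paper instead treats $\dim\rho\ge2$ by importing the Bhangale--Khot analysis as a black box, which is why it starts from Layered Label Cover.
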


Note that for any abelian group $G$, the subgroup $[G,G]$ consists only of the identity element of $G$. For non-abelian groups, the condition $[G,G]\subseteq H_S$ guarantees that the quotient group $Q:=G/H_S$ is abelian, which was crucial for the above-discussed approximation algorithm. See Theorem~\ref{thm:approx_alg} for a straightforward generalization of the above approximation algorithm for non-abelian groups. 

The above theorem shows that even for abelian groups $G$, the problem $\text{Max-E$k$-LIN}_{S}(G)$ is $\np$-hard on satisfiable instances, in general. 

\begin{remark}
    We remark that on instances that are almost satisfiable, picking a random assignment, that gives $\frac{|S|}{|G|}$-approximation for $\text{Max-E$k$-LIN}_{S}(G)$, is optimal assuming $\p\neq \np$. This follows easily by modifying the hardness reductions from~\cite{Hastad2001} for abelian groups and from~\cite{EngebretsenHR04} for non-abelian groups.
\end{remark}

A predicate $P$ is called approximation resistant if it is $\np$-hard to do better than the random assignment algorithm. We are aware of two results, one by H{\aa}stad~\cite{Hastad13} on satisfying degree-$d$ equations over GF$[2]^n$, and another by Bhangale-Khot~\cite{BhangaleK21} on Max-3LIN over non-abelian groups $G$ where $G$ is not a {\em simple} group. Both these predicates are approximation resistant in general (i.e., on almost satisfiable instances), but have non-trivial approximation algorithms on satisfiable instances. Thus, our result adds the generalized linear equation predicate to the small class of predicates that are known to be approximation resistant in general but have non-trivial approximation algorithms on satisfiable instances. We hope that our result gives another piece of information to help understand the approximability of Max-CSPs on satisfiable instances.

In Section~\ref{sec:approx_alg}, we give a simple $\frac{|S|}{|H_S|}$-approximation algorithm for $\text{Max-E$k$-LIN}_{S}(G)$ for every finite group $G$. The algorithm implicitly uses the abelian embedding of the generalized linearity predicate as defined in the series of work~\cite{BKMI, BKMII, BKMIV, BKMV} towards understanding the approximability of satisfiable CSPs. The main technical part of our work is to show that the above approximation algorithm is optimal for the certain subset $S$ assuming $\p\neq \np$. To show this optimal hardness result, we design a novel decoding procedure in the soundness analysis of the reduction, which may be of independent interest towards understanding the approximability of satisfiable CSPs. More specifically, the soundness analysis uses the fact that doing the Gaussian elimination over a certain abelian group is useful towards getting a better approximation algorithm for this problem on satisfiable instances. We elaborate more on this in the techniques section below.

\subsection{Related Work}
In this section, we go over relevant work on the inapproximability of Max-CSPs. The PCP Theorem~\cite{AroraLMSS1998, AroraS1998, FGLSS96} shows that Max-$P$-CSPs are $\np$-hard to approximate within a factor of $(1-\delta)$ for some constant $\delta>0$ if checking satisfiability of $P$-CSP is $\np$-complete. H{\aa}stad in his seminal work~\cite{Hastad2001} greatly improved the hardness of approximation results for a few CSPs. Notable examples of the CSPs from his work include Max-$3$SAT and Max-$3$LIN. For Max-CUT, which is a $2$-ary CSP, H{\aa}stad showed that it is $\np$-hard to approximate within a factor of $\frac{16}{17}$. Khot~\cite{Khot02} formulated the {\em Unique Games Conjecture} (UGC), which is a conjecture on the hardness of the Label Cover instances (see Definition~\ref{def:label-cover}) restricted to the constraints being $1$-to-$1$. Samorodnitsky and Trevisan~\cite{SamorodnitskyT00} showed that Boolean Max-$k$CSP (with arbitrary $k$-ary Boolean predicates) is $\np$-hard to approximate beyond $O(\frac{k}{2^k})$ assuming the UGC, matching the best algorithm up to a constant factor. Khot, Kindler, Mossel, and O’Donnell~\cite{KKMO07} showed that Max-CUT is $\np$-hard to approximate within a factor of $0.878$ assuming the UGC, which matches the approximation guarantee of the Goemans-Williamson~\cite {GW95} algorithm for Max-CUT.

Raghavendra~\cite{Rag09} presented an elegant result that generalizes the above Max-CUT reduction and establishes, for any Max-$P$-CSP instance, the $(c, s)$-integrality gap of the basic Semidefinite programming relaxation implies finding an $s+\eps$ satisfying assignment on $c-\eps$ satisfiable instances is $\np$-hard assuming the UGC. This result fully characterizes the approximability of Max-$P$-CSPs assuming the UGC on almost satisfiable instances. Furthermore, given a fixed predicate $P$, Raghavendra's result does not explicitly give the optimal hardness factor for Max-$P$-CSP. Austrin and Mossel~\cite{AustrinM09} gave the right threshold for predicates that support a uniform and pairwise independent distribution. A distribution $\mu$ on $P^{-1}(1)\subseteq \Sigma^k$ is said to be pairwise independent if for every distinct pair $i, j\in [k]$, the marginal of $\mu$ restricted to the coordinates $(i,j)$ is uniform over $\Sigma^2$. Austrin and Mossel showed that such predicates are approximation resistant on almost satisfiable instances, assuming the UGC.

Chan~\cite{Chan16} established a general criterion for approximation resistance, resolving the NP-hardness of Max-$k$-CSP up to a constant factor and assuming $\p\neq\np$. Specifically, he proved the hardness for Max-CSPs where the domain is an abelian group $G$ and the predicate $P^{-1}(1)\subseteq G^k$ is a subgroup that satisfies a condition analogous to that identified by Austrin and Mossel.

The question of finding the optimal approximation algorithm (even assuming certain conjectures, like $d$-to-$1$ conjecture~\cite{Khot02} or Rich $2$-to-$1$ conjecture~\cite{BravermanKM21}) for satisfiable instances of Max-$P$-CSPs is wide open. In a recent series of work, Bhangale, Khot, and Minzer~\cite{BKMI, BKMII} defined a property of abelian embeddability of the predicate towards understanding the approximability of satisfiable CSPs. A predicate $P: \Sigma^k \rightarrow \{0,1\}$ is said to have an abelian embedding in an abelian group $G$, if there are maps $\alpha_i: \Sigma\rightarrow G$, not all constant, such that $\sum_i \alpha_i(a_i) = 0_G$ for every $(a_1, a_2, \ldots, a_k) \in P^{-1}(1)$. They gave an optimal dictatorship test for $3$-ary predicates that have no abelian embedding. Very recently, for certain $3$-ary predicates that have an abelian embedding, they gave an approximation algorithm~\cite {BKMIV, BKMV} for satisfiable instances that uses a combination of Gaussian elimination as well as the SDP rounding algorithm. They also showed the `optimality' of this algorithm by giving a dictatorship test with matching parameters.

\subsection{Techniques}
\label{sec:techniques}
In this section, we give an overview of the techniques in the inapproximability results of our main theorem. 

\subsubsection{Abelian Groups}
We begin with the case of abelian groups to highlight one of the main differences in the analysis of the reduction compared to the seminal work of H{\aa}stad on Max-3LIN over an abelian group. We assume some familiarity with the Fourier analysis of functions over abelian groups (for instance, Chapter 8 of Ryan O'Donnell's book~\cite{O14}). Throughout the section, $\eps >0$ is an arbitrarily small constant.

\newcommand{\pr}{P}

Starting with a work of H{\aa}stad~\cite{Hastad2001}, a typical way to prove the hardness of approximation is to start with an $\np$-hard problem called the {\em \LC}~\cite{GHS02, Khot02a, Khot02b, DinurG13, KhotS13, Hastad14, GuruswamiHMRC11}. A gadget is built on top of the \LC instance to create an instance of a given CSP. For simplicity of the presentation, we focus here is on the most important and technical component of the reduction. This component is the construction of {\em dictatorship tests} and analyzing the tests.

A function $f:[q]^n \rightarrow [q]$ is called a dictator function if $f(x_1, x_2, \ldots, x_n) = x_j$ for some $j\in [n]$. A dictatorship test for a predicate $\pr:[q]^k \rightarrow \{0,1\}$ consists of a distribution $\mu$ that is (almost) supported on the set of satisfying assignments of $\pr$. The test samples $k$ inputs $z_1, z_2, \ldots, z_k$ as follows: For each coordinate $i\in [n]$, the tuple $((z_1)_i, (z_2)_i, \ldots, (z_k)_i)$  is sampled independently from $\mu$. The test accepts $f$ if $\pr(f(z_1), f(z_2), \ldots, f(z_k))$ evaluates to $1$, i.e., $f(z_1), f(z_2), \ldots, f(z_k)$ forms a satisfying assignment for $\pr$. It is clear that if $f$ is a dictator function, then the test passes with probability (almost) $1$. This is because, in this case, we are checking if the $i^{th}$ coordinate of the inputs is from $\pr^{-1}(1)$, which is always true by construction. 

Once the distribution is fixed, the next step is to analyze the soundness of the dictatorship test, i.e., the probability with which the test passes if $f$ is `far from the dictator functions'. The notion of far from dictator functions changes based on the hardness reduction. A typical notion that is used is that the function has all the variables with degree $d = O(1)$ influences low. The influence of the $i^{th}$ coordinate on the function is the probability that, on a random input, changing the $i^{th}$ coordinate changes the values of the function. In terms of the Fourier coefficients of $f$, this is equal to the following quantity:
	$$\inf_i(f):= \sum_{\alpha: \alpha_i\neq 0} |\hat{f}(\alpha)|^2.$$
Thus, the $i^{th}$ dictator function has $\inf_i(f) = 1$. A degree-$d$ influence of the $i^{th}$ variable is given by the following expression,
	$$\inf_i^{\leq d}(f):= \sum_{\alpha: \alpha_i\neq 0 \wedge |\alpha|\leq d} |\hat{f}(\alpha)|^2,$$
where $|\alpha|$ is the number of non-zero coordinates of $\alpha$. With this notion of far from dictator functions in mind, the dictatorship test used in the hardness reduction~\cite{Hastad2001} of Max-3LIN works as follows.

\begin{enumerate}
	\item Select $\V x, \V y \sim G^n$ uniformly at random.
	\item Set $\V z = \V x+\V y$.
    \item For each $i\in [n]$, resample $(x_i, y_i, z_i)$ from $G^3$ uniformly at random, with probability $\eps$. 
	\item Check if $f(\V x) + f(\V y) = f(\V z) $.
\end{enumerate}
It is clear that any dictator function passes the above test with probability at least $1-\eps$. To analyze the soundness of the test, we can express the test passing probability as follows:
		\begin{align*}
			\Pr[\mbox{Test passes}]  = \frac{1}{|G|}\E_{ (\V x, \V y, \V z)} \left[ \sum_{\chi_\rho\in \hat{G}} \chi_{\rho}(f(\V x)+ f(\V y)+f(\V z)) \right],
		\end{align*}
        where the summation is over all the characters of the group $G$. The term with $\chi_\rho$ being the trivial character gives $\frac{1}{|G|}$. For the remaining terms with $\chi_\rho$ being a non-trivial character, we are left with analyzing the following expectation
        $$\E_{ (\V x, \V y, \V z)} \left[  \chi_{\rho}(f(\V x)+ f(\V y)+f(\V z)) \right] = \E_{ (\V x, \V y, \V z)} \left[\chi_{\rho}(f(\V x))\cdot \chi_\rho(f(\V y))\cdot \chi_\rho(f(\V z)) \right].$$
        By letting $F(\V w) = \chi_\rho(f(\V w))$ and expanding the function $F$ with the Fourier basis over $G^n$, and doing some simplifications, we get that the above expectation is upper bounded as follows. 
        \begin{align*}
        \E_{ (\V x, \V y, \V z)} \left[\chi_{\rho}(f(\V x))\cdot \chi_\rho(f(\V y))\cdot \chi_\rho(f(\V z)) \right] &\leq \sum_{\alpha} |\hat{F}(\alpha)|^3 \cdot (1-\eps)^{|\alpha|}\\
        & =  \sum_{\alpha \wedge |\alpha|\leq d} |\hat{F}(\alpha)|^3 \cdot (1-\eps)^{|\alpha|} + \sum_{\alpha \wedge |\alpha|> d} |\hat{F}(\alpha)|^3 \cdot (1-\eps)^{|\alpha|}.
        \end{align*}
       \noindent{\bf Decoding Strategy.} Now, because of the `{\em noise}' (i.e., step 3 in the test), the second term can be shown to be negligible for some large $d$ as a function of $\eps$, by a simple application of Cauchy-Schwarz inequality and using Parseval's identity. As for the first term, if it is non-negligible, then the structure of the function can be used in the hardness reduction starting from the \LC instance.
       
       More precisely, the starting point of the reduction is a \LC instance (see Definition~\ref{def:label-cover}). It is $\np$-hard to distinguish between the \LC instances with value $1$ from instances with value at most $\delta$ for small $\delta>0$. In the reduction, each vertex $v$ of the \LC instance is replaced with a cloud of vertices $C[v]$ where $|C[v]|$ is $G^M$ where $M$ is the label size of the vertex. These constitute the variables/literals of the reduced instance of Max-3LIN. The distribution on the constraints is specified by the above dictatorship test. If the value of the \LC instance is $1$, then the value of the reduced Max-3LIN instance is at least $1-\eps$. Now, similar to the above analysis, if we fix an assignment $f$ to the Max-3LIN instance with value $\frac{1}{|G|} + \eps$ for some $\eps>0$, then the assignment restricted to most of $C[v]$, call it  $f_v$, have high degree-$d$ influential variables. From this, one can come up with a labeling to the \LC instance with value greater than $\delta = \delta(\eps)$, thereby showing the soundness of the reduction. The strategy simply picks a random non-trivial $\rho$, a Fourier coefficient $\widehat{\chi_\rho(f_v)}(\alpha)$ of $f_v$ with probability $|\widehat{\chi_\rho(f_v)}(\alpha)|^2$ and assign a label $i$ such that $\alpha_i$ is not a trivial character.\\

        We now give the natural extension of the above dictatorship test to the generalized linear equation predicate that we consider in this paper. Towards this, fix an abelian group $(G, +)$ and a subset $S\subseteq G$. Consider the following dictatorship test.

\begin{enumerate}
	\item Select $\V x, \V y \sim G^n$ uniformly at random.
    \item Select $\V w\sim S^n$ uniformly at random.
    \item Set $\V z = - \V x-\V y + \V w$.
    \item Check if $f(\V x) + f(\V y)  + f(\V z)\in S $.
\end{enumerate}

It is clear that every dictator function passes the test with probability $1$. Similar to the above analysis, the soundness of the test can be expressed as follows:
		\begin{align*}
			\Pr[\mbox{Test passes}]  = \frac{1}{|G|}\E_{ (\V x, \V y, \V z)} \left[ \sum_{s\in S} \sum_{\chi_\rho\in \hat{G}} \chi_{\rho}(f(\V x)+ f(\V y)+f(\V z) - s) \right],
		\end{align*}
        Once again, the terms that correspond to the trivial character give $\frac{|S|}{|G|}$ (which corresponds to the approximation ratio of the algorithm that picks a random assignment). Note that our goal is to show that the test passes with probability almost $\frac{|S|}{|H_S|} + \eps$ in the soundness towards proving Theorem~\ref{thm:main}. Once again, the term that corresponds to a given $s\in S$ and a character $\chi_\rho$ gives,
          $$\E_{ (\V x, \V y, \V z)} \left[  \chi_{\rho}(f(\V x)+ f(\V y)+f(\V z) -s) \right] = \E_{ (\V x, \V y, \V z)} \left[\chi_{\rho}(f(\V x))\cdot \chi_\rho(f(\V y))\cdot \chi_\rho(f(\V z)) \cdot\chi_\rho(s)\right].$$
Ignoring the constant shift $\chi_\rho(s)$, we are again left with analyzing the expectation 
$$\E_{ (\V x, \V y, \V z)} \left[\chi_{\rho}(f(\V x))\cdot \chi_\rho(f(\V y))\cdot \chi_\rho(f(\V z))\right].$$
Recall that the subgroup $H_S$ is the smallest subgroup such that $S\subseteq g+H_S$ for some $g\in G$. Now, unlike the Max-3LIN, we cannot expect this expectation to be small when the functions $F$ have negligible degree-$d$ influences for some $d= O(1)$. To see this, consider a character $\chi_\rho$ that is constant on the subgroup $H_S$. If we let $f(\V x) = \sum_i x_i$ where the sum is the group operation, then the derived function $F$ has all degree-$d$ influences $0$. However, the expectation becomes $\E[\chi_\rho(\sum_i w_i)]$, since $S\subseteq g+H_S$ for some $g$, we get that the $\sum_i w_i \in  ng H_S$, i.e, it always belongs to a fixed coset of $H_S$. Since the character $\chi_\rho$ is constant on the subgroup $H_S$ (and hence constant on every coset of $H_S$), we get that the expectation is $1$ in absolute value. The number of such characters (including the trivial character) for which we cannot bound the expectation is precisely $|G|/|H_S|$. This gives the right factor $\frac{|S|}{|H_S|}$ in the test passing probability that we need for our Theorem~\ref{thm:main}.

Now, consider the character $\chi_\rho$, which is not constant on $H_S$. Similar to H{\aa}stad's analysis, we can upper bound the corresponding expectation as follows
      \begin{align*}
        \E_{ (\V x, \V y, \V z)} \left[\chi_{\rho}(f(\V x))\cdot \chi_\rho(f(\V y))\cdot \chi_\rho(f(\V z)) \right] &\leq \sum_{\alpha} |\hat{F}(\alpha)|^3 \cdot (1-\eta)^{|\alpha|_{S}},
        \end{align*}
where $\eta>0$ is a non-zero constant that only depends on $|G|$ and $|\alpha|_{S}$ is the number of coordinates of $\alpha$ where the character $\alpha_i$ is {\em not constant} on the subgroup $H_S$. In order to work with this expression, we modify the notion of far from dictator functions that will be useful for our hardness reduction as follows.

\paragraph{\bf Modified low-degree influences.} A modified degree-$d$ influence of the $i^{th}$ variable is expressed as the following expression,
	$$\inf_i^{\leq d}(f):= \sum_{\alpha: \alpha_i\neq 0 \wedge |\alpha|_{S}\leq d} |\hat{f}(\alpha)|^2,$$
where $|\alpha|_{S}$ is the number of coordinates of $\alpha$ where the character $\alpha_i$ is not constant on the subgroup $H_S$. With this change, we split the summation as follows:
   \begin{align*}
        \sum_{\alpha} |\hat{F}(\alpha)|^3 \cdot (1-\eta)^{|\alpha|_{S}}
        & =  \sum_{\alpha \wedge |\alpha|_{S}\leq d} |\hat{F}(\alpha)|^3 \cdot (1-\eta)^{|\alpha|_{S}} + \sum_{\alpha \wedge |\alpha|_{S}> d} |\hat{F}(\alpha)|^3 \cdot (1-\eta)^{|\alpha|_{S}}.
    \end{align*}
Again, the second term is negligible for some large $d=O_{|G|}(1)$. \\

\noindent{\bf Modified Decoding Strategy.} In our analysis of the reduction, we need to show that the terms that are similar to the first term are not negligible, then there is a decoding strategy (similar to the one described in the reduction to Max-3LIN above) in the hardness reduction starting from the \LC instance. The following modified strategy works. The strategy picks a random non-trivial $\rho$ that is not constant on $H_S$, a Fourier coefficient $\widehat{\chi_\rho(f_v)}(\alpha)$ of $f_v$ with probability $|\widehat{\chi_\rho(f_v)}(\alpha)|^2$ and assign a label $i$ such that the character corresponding to $\alpha_i$ is not constant on $H_S$.\\

\subsubsection{Non-abelian Groups}
The above dictatorship test can be easily modified for the non-abelian case in a natural way
% (see Section~\ref{sec:nonabelian_red} for more details). 
\begin{enumerate}
    \item Select $\V x,\V y\sim G^n$ uniformly at random.
    \item Select $\V s\in S^n $ uniformly at random.
    \item For each $i\in [n]$, set $z_i = y_i^{-1}\gp x_i^{-1}\gp s_i$.
    \item Check if $f(\V x)\gp f(\V y)\gp f(\V z)\in S$.
\end{enumerate}
The completeness case is trivial. In the soundness case, we again express the test passing probability as 
	\begin{align*}
			\Pr[\mbox{Test passes}]   = \frac{1}{|G|}\E_{(\V x, \V y, \V z)} \left[ \sum_{s\in S} \left[ \sum_{\rho\in \irr(G)}\dim(\rho)\cdot \chi_{\rho}(f(\V x)\gp f(\V y)\gp  f(\V z)\gp s^{-1}) \right] \right],
		\end{align*}
where the inner sum is over all irreducible representations of $(G, \gp)$ and $\dim(\rho)$ is the dimension of the representation $\rho$. Similarly to the abelian case, for certain representations of dimension $1$ (that are constant on the subgroup $H_S$ defined in Theorem~\ref{thm:main}), we bound the expectation by $1$ in absolute value. This gives the factor $\frac{|S|}{|H_S|}$. The analysis for the remaining dimension $1$ representations, Lemma \ref{lemma:soundness_dimone}, is analogous to the one described in the abelian case but focuses on the natural structure of non-abelian groups and generalized linear equation predicates. For some technical reasons, for non-abelian groups, we could analyze this under the assumption that $S^{-1} S$ generates the subgroup $H_S$.

Regarding representations with dimension $\geq 2$, we treat the analysis of Bhangale and Khot~\cite{BhangaleK21} as a black box (Lemma~\ref{lemma:highterm_bound}) to conclude that the associated expectations are small unless they yield a decoding strategy for the \LC\ instance. For some technical reasons, the analysis of this part of the reduction requires the use of \LLC\ instead of the bipartite \LC. Therefore, in our main reduction, we also use a \LLC\ instance as a starting point, but we only use the layered version in the proof of Lemma~\ref{lemma:highterm_bound} to adopt results from~\cite{BhangaleK21new}. For the primary part of this paper, we only need a bipartite \LC\ instance.

\begin{remark}
    The conference version of the paper~\cite{BhangaleK21} had an error, which the authors fixed with the use of \LLC\ as a starting point.\footnote{personal communication~\cite{BhangaleK21new}} This fix is reflected in the Lemma~\ref{lemma:bk_new_high}~\cite[Claim 4.5]{BhangaleK21new} that we use as black-box.
\end{remark}

\subsection{Organization}
 We begin Section~\ref{sec:prelims} by defining the \LC\ instance and the hardness of approximation of \LC, which is the starting point of our reduction. In Section~\ref{sec:reptheory}, we go over the basics of Fourier analysis over general finite groups. We formally give the approximation algorithm described in the introduction in Section~\ref{sec:approx_alg}. Finally, in Section~\ref{sec:nonabelian_red} we give our hardness reduction for $\text{Max-E$3$-LIN}_{S}(G)$ and analyze the reduction. The hardness for $\text{Max-E$k$-LIN}_{S}(G)$ for $k\geq4$ follows easily from a similar reduction, thereby proving the main Theorem~\ref{thm:main}.

\section{Preliminaries}
        \label{sec:prelims}

	\subsection{Label Cover}

	We start by defining the \LC\ and \LLC\ problem, which we use as a starting point for our reduction.
	\begin{definition}[\LC]
		
		\label{def:label-cover} An instance $\Psi=(U,V,E,[L],
		[R],\{\pi_e\}_{e\in E})$ of the {\LC} constraint satisfaction
		problem consists of a bi-regular bipartite graph $(U,V,E)$,
		   alphabets $[L]$ and $[R]$ and a surjective projection map $\pi_e : [L] \rightarrow
		[R]$ for every edge $e\in E$. 
		Given a labeling $\ell : U \rightarrow [L], \ell:V \rightarrow
		[R]$, an edge $e = (u,v)$ is said to be satisfied by $\ell$ if
		$\pi_e(\ell(u)) = \ell(v)$. 
		
		$\Psi$ is said to be \emph{satisfiable} if there exists a 
		labeling that satisfies all the
		edges. $\Psi$ is said to be \emph{at most $\delta$-satisfiable} if every
		labeling satisfies at most a $\delta$ fraction of the
		edges. 
	\end{definition}
	
	The hardness of \LC\ stated below follows from
	the PCP Theorem \cite{AroraS1998,AroraLMSS1998, FGLSS96} and Raz's Parallel
	Repetition Theorem \cite{Raz1998}. %The additional structural property on the hard instances (item 2 below) is proved by H\aa stad \cite[Lemma 6.9]{Hastad2001}.
	\begin{theorem}[Hardness of \LC]
		
		\label{thm:LChard} For every $r \in \N$, there is a deterministic $n^{O(r)}$-time reduction from a \TSAT\ instance
		of size $n$ to an instance $\Psi=(U,V,E,[L],[R], \{\pi_e\}_{e\in E})$ of \LC\ with the following properties: $|U|,|V| \leq n^{O(r)}$; $L,R \leq 2^{O(r)}$; $\Psi$ is bi-regular with degrees bounded by $2^{O(r)}$.
			
			\begin{itemize}
				
				\item YES Case : If the \TSAT\ instance is
				satisfiable, then $\Psi$ is satisfiable.
				
				\item NO Case : If the \TSAT\ instance is
				unsatisfiable, then $\Psi$ is
				at most $2^{-r}$-satisfiable.
			\end{itemize}
	\end{theorem}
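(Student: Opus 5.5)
This is a standard consequence of the PCP Theorem combined with Raz's Parallel Repetition Theorem. The plan is to build a constant-gap bipartite \LC\ instance directly from a gap version of \TSAT, and then amplify the gap by parallel repetition.

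\emph{Base instance.} By the PCP Theorem \cite{AroraS1998,AroraLMSS1998,FGLSS96}, together with the standard degree-reduction step, there is an absolute constant $\gamma>0$ and a polynomial-time reduction from \TSAT\ to \TSAT\ instances with the following properties:
\begin{itemize}
\item every clause contains exactly $3$ distinct variables, and every variable occurs in exactly $5$ clauses;
\item satisfiable formulas map to satisfiable formulas;
\item unsatisfiable formulas map to formulas in which every assignment violates at least a $\gamma$ fraction of the clauses.
\end{itemize}
From such a formula $\phi$ I build the clause--variable game as follows. Let $U$ be the set of clauses and $V$ the set of variables, and put an edge $(C,x)$ whenever $x$ occurs in $C$. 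Take $[L]$ to be the $7$ satisfying assignments of a clause (indexed in a fixed way) and $[R]=\{0,1\}$, so $L=7$ and $R=2$. For an edge $(C,x)$, let $\pi_{(C,x)}$ restrict a satisfying assignment of $C$ to $x$. This map is surjective: among the $7$ satisfying assignments of a clause on $3$ distinct variables, each variable takes both values. The graph is bi-regular with degrees $3$ and $5$.

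\emph{Base completeness and soundness.} Completeness is immediate: label each clause by the restriction of a satisfying assignment of $\phi$, and each variable by its value. For soundness, take any labeling and read off the assignment given by the variable labels. At least a $\gamma$ fraction of the clauses are violated by it. For each such clause $C$, the label of $C$ is a satisfying assignment of $C$, so it must disagree with the variable labeling on at least one of its $3$ edges. Hence at most a $1-\gamma/3$ fraction of the edges are satisfied.

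\emph{Amplification.} Next I apply $r'$-fold parallel repetition, with $r'=c\cdot r$ for a suitable constant $c$. The new sides are $U^{r'}$ and $V^{r'}$, and the edges are $r'$-tuples of edges. The alphabets become $[L]^{r'}$ and $[R]^{r'}$, of sizes $7^{r'}$ and $2^{r'}$, and each projection acts coordinatewise. Coordinatewise products of surjections are surjective, and the product graph is bi-regular with degrees $3^{r'}$ and $5^{r'}$. Sizes are $n^{O(r)}$, alphabets and degrees are $2^{O(r)}$, and the construction runs in time $n^{O(r)}$. Completeness is preserved by taking the product of the perfect labeling.

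\emph{Main obstacle.} The only nontrivial step is soundness after repetition, and I would take it as a black box from Raz's theorem \cite{Raz1998}. The base game has value at most $1-\gamma/3<1$ and answer size $O(1)$, so the $r'$-fold repeated game has value at most $(1-\gamma')^{r'}$ for some constant $\gamma'>0$ depending only on $\gamma$ and the alphabet sizes. Choosing $c$ large enough that $(1-\gamma')^{c}\leq 1/2$ makes the NO-case value at most $2^{-r}$, which completes the proof.
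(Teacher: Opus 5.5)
Your proposal is correct and follows the route the paper indicates: the paper gives no proof of its own and only states that the result follows from the PCP Theorem and Raz's Parallel Repetition Theorem. Your clause--variable construction from gap E3-SAT-5, with base soundness $1-\gamma/3$ and $c\cdot r$-fold repetition, is the standard way to fill in those details, and the size, alphabet, degree and surjectivity bounds you check are all correct.
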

For our hardness result, we need the following variant of the \LC~problem.

	\begin{definition}[\LLC]
	    An $T$-\LLC\ instance, given by $\mathcal{H}=(\mathcal{V}=\{V_1,\dots,V_T\}, \{\Pi_{i,j}\}_{1\leq i < j\leq T}\}, \{[R_i]_{i\in [T]}\}$ consist of $T$ sets of vertices $\mathcal{V}=\{V_1,\dots,V_T\}$. The label set of vertices in layer $i$ is denoted by $[R_i]$. Every pair of layers $1\le i<j\le l$ has a set of constraints $\Pi_{i,j}$ between the vertices in $V_{i}$ and $V_{j}$. The constraint between $v\in V_{i}$ and $u\in V_{j}$ (if it exists in $\Pi_{i,j}$) is denoted by $\pi_{vu}$. Moreover, every constraint between a pair of vertices is a projection constraint: for every assignment $k\in[R_{i}]$ to $v\in V_{i}$, there is a unique assignment to $u\in V_{j}$ that satisfies the constraint $\pi_{vu}$. 
	\end{definition}
    \begin{theorem}[\cite{DinurGKR03}, Hardness of \LLC]
    \label{thm:layered_lc_hardness}
        For any constant parameters $T\ge2$, $r\in\mathbb{Z}$, the following problem is NP-hard. Given an $T$-\LLC\ instance $\mathcal{H}=(\mathcal{V}=\{V_{1},...,V_{T}\},\{\Pi_{i,j}\}_{1\le i<j\le T},\{[R_{t}]\}_{t\in[T]})$ where all variable ranges $R_{t}$ are of size $2^{O(Tr)}$, distinguish between the following two cases:

    \begin{itemize}
        \item \textbf{Completeness.} There is an assignment satisfying all the constraints of the Label Cover instance. In this case, we say that $\mathcal{H}$ is fully satisfiable.
        \item \textbf{Soundness.} For every $1\le i<j\le T$, no assignment satisfies more than a $2^{-r}$ fraction of the set of constraints $\Pi_{i,j}$ between layers $i$ and $j$. In this case, we say that $\mathcal{H}$ is at most $2^{-r}$-satisfiable.
    \end{itemize}
    \end{theorem}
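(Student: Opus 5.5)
We would follow the construction of Dinur, Guruswami, Khot and Regev: stack blocks of a repeated basic \LC\ game so that any two layers see a parallel-repeated game. The base object is the gap 3-SAT instance from the PCP Theorem, in the form of a bipartite \LC\ $\Psi_0$ between clauses and variables, as in Theorem~\ref{thm:LChard} with $r$ replaced by a constant. In $\Psi_0$ a clause is labeled by a satisfying assignment to its three variables, a variable by a bit, and the projection is restriction. It has value $1$ in the YES case and value at most $1-\gamma$ in the NO case.

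\textbf{Construction.} Fix a block length $r'$, to be chosen as a function of $r$. Each vertex of layer $t\in[T]$ is a tuple of $T-1$ blocks $(B_1,\dots,B_{T-1})$, each block being an $r'$-tuple.
\begin{itemize}
\item The blocks $B_1,\dots,B_{t-1}$ are variable-blocks: $r'$-tuples of variables.
\item The blocks $B_t,\dots,B_{T-1}$ are clause-blocks: $r'$-tuples of clauses.
\item The labels are the natural assignments to all the underlying variables, so $|R_t|\le 2^{O(Tr')}$.
\end{itemize}
For $i<j$, a constraint joins $v\in V_i$ to $u\in V_j$ when $u$ is obtained from $v$ by replacing each clause in blocks $B_i,\dots,B_{j-1}$ with one of its variables. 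The map $\pi_{vu}$ restricts the assignment to those variables and copies the labels on all unchanged blocks, so it is a projection constraint. The distribution on $\Pi_{i,j}$ is the natural one: sample a uniform $v$, then a uniform choice of variable inside each replaced clause. Completeness is immediate, since a satisfying assignment of the 3-SAT formula induces labels satisfying every constraint between every pair of layers.

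\textbf{Soundness.} Fix $i<j$ and a labeling that satisfies more than a $2^{-r}$ fraction of $\Pi_{i,j}$. The blocks outside $B_i,\dots,B_{j-1}$ are identical on both endpoints of every edge of $\Pi_{i,j}$. Conditioned on their values, the distribution on the remaining blocks is exactly the $(j-i)r'$-fold parallel repetition of $\Psi_0$. By averaging, some fixing of the outside blocks gives labelings of this repeated game satisfying more than a $2^{-r}$ fraction of its constraints, after we discard the coordinates of the outside blocks from the labels. Since $j-i\ge1$, Raz's Parallel Repetition Theorem~\cite{Raz1998} bounds the value of this repeated game by $(1-\gamma')^{r'}$. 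Choosing $r'=Cr$ for a large enough constant $C$ makes this at most $2^{-r}$, a contradiction. The label sizes are then $2^{O(Tr)}$, and the reduction runs in polynomial time because $T$ and $r$ are constants.

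\textbf{Main obstacle.} The delicate step is the averaging argument in the soundness analysis. We must check that, after conditioning on the common outside blocks, the induced bipartite game between layers $i$ and $j$ is exactly the repeated game with the correct marginals. This requires the sampling distribution of each layer to be consistent across layers, which is why vertices are weighted by the product distribution and not uniformly over tuples. We must also check that any labeling of the layers restricts to a valid strategy for both provers. We would first verify that the marginal of $\Pi_{i,j}$ on $V_j$ coincides with the layer-$j$ product measure, using regularity of $\Psi_0$. Once that holds, the reduction to Raz's theorem is mechanical.
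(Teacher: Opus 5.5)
Your proposal is correct and is essentially the Dinur--Guruswami--Khot--Regev layered construction that the paper uses; the paper cites it and gives no proof of its own. The construction has $T-1$ blocks that switch from clause-tuples to variable-tuples as the layer index grows, and the $(i,j)$ game is reduced to the $(j-i)r'$-fold repetition of the clause--variable game by fixing the shared outside blocks and applying Raz's theorem. The consistency of layer marginals that you flag as the main obstacle is not actually needed for the statement as given. Soundness is measured as a fraction of the constraints in $\Pi_{i,j}$, and every $v\in V_i$ has the same number of neighbours in $V_j$ (namely $3^{(j-i)r'}$ when each clause has three distinct variables). So the uniform distribution on $\Pi_{i,j}$, conditioned on the outside blocks, is already the repeated game's distribution.
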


	\subsection{Fourier analysis}

	In this section, we give a brief overview of the representation theory of non-abelian groups and the Fourier analysis of non-abelian groups. For a more comprehensive understanding, we refer the reader to the book by Terras~\cite{Terras99}. We state many propositions in the following subsection, and the proofs of these propositions can be found in the same book \cite{Terras99}. This section is identical to the section from~\cite{BhangaleK21}.
	
	\subsubsection{Representation Theory}
	\label{sec:reptheory}
	In this paper, we only consider non-abelian groups which are {\em finite}. Let $G = (G,\gp)$ be a finite non-abelian group. The identity element of a group is denoted by $1_G$.
	
	\begin{definition}
		A representation $(V, \rho)$ of $G$ is a vector space $V$ together with a group homomorphism $\rho: G \rightarrow \GL(V)$ from $G$ to the group $\GL(V)$ of invertible $\C$-linear transformations from $V$ to $V$. The dimension of the vector space $V$ is denoted by $\dim(\rho)$.
	\end{definition} 
	
	For convenience, we just use the letter $\rho$ to denote a representation of $G$ and use $\rho_V$ to denote the underlying vector space. We view a representation $\rho(\cdot)$ as its corresponding matrix of the linear transformation. Thus $\rho(\cdot)_{ij}$ is used to denote the $(i,j)^{th}$ entry of that matrix. We always work with representations that are unitary. There is one representation that is obvious -- just map everything to $1\in \C$.  This representation is called the {\em trivial representation}, which has dimension $1$. We will denote the trivial representation by $\trivrep$. 
	
	\begin{definition}
		Let $\rho$ and $\tau$ be representations of $G$. An isomorphism from $\rho_V$ to
		$\tau_V$ is an invertible linear transformation $\phi : \rho_V \rightarrow \tau_V$ such that 
		$$ \phi \circ \rho(g) = \tau(g)\circ \phi,$$
		for all $g\in G$. We say that $\rho_V$ and $\tau_V$ are isomorphic and write $\rho_V\cong \tau_V$ if there exists an isomorphism from $\rho_V$ to $\tau_V$.
	\end{definition}

	\begin{definition}
		Let $\rho$ be a representation of $G$. A vector subspace $W\subset \rho_V$ is $G$-invariant if $\rho(g)w \in W$ for all $g\in G$ and $w\in W$.
	\end{definition}
	
	If a representation $(V, \rho)$ has a $G$-invariant subspace $W$ other than $\{0\}$ and $V$ itself, then the action on $W$ itself is a representation of $G$. This leads to the following important definition of irreducible representations.
	
	\begin{definition}
		A representation $\rho$ of $G$ is irreducible if $\rho_V\neq \emptyset$ and $\rho_V$ has no $G$-invariant subspaces other than $\{0\}$ and $\rho_V$.
	\end{definition}
	
	We will denote the set of all irreducible representations of $G$ up to isomorphism by $\irr(G)$.

	\begin{fact}
		\label{fact:subgroup_rep}
		Let $G$ be a group and $H$ be any subgroup of $G$, if $\rho\in \irr(G)$  then $\rho$ restricted to $H$ is also a (not necessarily irreducible) representation of $H$.
	\end{fact}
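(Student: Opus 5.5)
The fact is immediate from the definitions, so I would argue it directly rather than invoke any machinery. Let $\rho: G \rightarrow \GL(\rho_V)$ be an irreducible representation of $G$, and let $\iota: H \hookrightarrow G$ be the inclusion of the subgroup $H$. I would define the restriction $\rho|_H := \rho \circ \iota : H \rightarrow \GL(\rho_V)$ and check that it is a representation of $H$ on the same vector space $\rho_V$.

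The only thing to verify is that $\rho|_H$ is a group homomorphism. This holds because $\iota$ is a homomorphism (the group operation of $H$ is inherited from $G$), and a composition of homomorphisms is a homomorphism. Concretely, for $h_1, h_2 \in H$ we have $\rho|_H(h_1 \gp h_2) = \rho(h_1\gp h_2) = \rho(h_1)\rho(h_2)$, and $\rho|_H(1_G) = I$. Unitarity is also preserved, since each matrix $\rho(h)$ was already unitary. Hence $(\rho_V, \rho|_H)$ is a representation of $H$, with $\dim(\rho|_H) = \dim(\rho)$.

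For the parenthetical claim that the restriction need not be irreducible, I would give one example. Take $H = \{1_G\}$ and any $\rho \in \irr(G)$ with $\dim(\rho) \geq 2$; such a $\rho$ exists for any non-abelian $G$, for instance the $2$-dimensional irreducible representation of $S_3$. Then $\rho|_H$ is the identity map on $\rho_V$, so every line in $\rho_V$ is $H$-invariant, and the restriction is reducible.

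There is no real obstacle here. The only point needing any care is exhibiting the non-irreducibility example, and the trivial subgroup settles it.
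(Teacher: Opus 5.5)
Your proof is correct. The paper states this as a standard fact without proof (deferring to Terras), and your direct check that $\rho\circ\iota$ is a homomorphism $H\to\GL(\rho_V)$ is exactly the intended justification; your example, the trivial subgroup with the $2$-dimensional irreducible representation of $S_3$, correctly shows why irreducibility can be lost.
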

	
	\begin{definition} The tensor product of two representations $\rho$ and $\tau$ of a group $G$  is
		the representation $\rho \otimes \tau$ on $\rho_V \otimes \tau_V$ defined by the condition
		$$(\rho\otimes \tau)(g)(v\otimes w) = \rho(g)(v)\otimes \tau(g)(w),$$
		and extended to all vectors in $\rho_V \otimes \tau_V$  by linearity.
	\end{definition}

	\begin{definition}
		The direct sum of two representations $\rho$ and $\tau$ is the space $\rho_{V} \oplus \tau_{V}$	with the block-diagonal action $\rho\oplus \tau$ of $G$.
	\end{definition} 
	
	If the representation is not irreducible, then by an appropriate change of basis $\rho$ can be converted into a block diagonal matrix with blocks corresponding to the invariant subspaces. Thus, any representation can be completely decomposed into a direct sum of irreducible representations of $G$, by applying an appropriate unitary transformation. Note that this decomposition is {\em unique}.  We use the following notation to denote the decomposition of a reducible representation: If $\rho$ is a reducible representation of $G$ then $\rho \cong \oplus_{i} n_i \rho_i$, where each $i$ we have {\em distinct} $\rho_i\in \irr(G)$ and $n_i$ denotes the multiplicity of $\rho_i$ in the decomposition. It will be convenient to think of this representation as a block diagonal matrix with $\rho_i$ as the blocks along the diagonal with multiplicity $n_i$.

	The following proposition shows that matrix entries of irreducible representations are 'orthogonal' with respect to a {\em symmetric bilinear form}, unless they are conjugates of each other -- in which case the corresponding product is the inverse of the dimension of the representation.
	\begin{proposition}
		\label{prop:orthmatrixentries}
		If $\rho$ and $\tau$ are two non-isomorphic irreducible representations of $G$ then for any $i,j,k,l$ we have
		\begin{equation}
			\label{eq:diffG_Product}
			\langle  (\rho)_{ij}\mid ( \tau)_{kl}\rangle_G = 0,
		\end{equation}
		where $\langle  f_1\mid  f_2\rangle_G := \frac{1}{|G|}\sum_{g\in G} f_1(g) f_2(g^{-1})$ (called a ``symmetric bilinear form''). Also,
		\begin{equation}
			\label{eq:sameG_Product}
			\langle  (\rho)_{ij}\mid ( \rho)_{kl}\rangle_G = \frac{\delta_{il}\delta_{jk}}{\dim(\rho)},
		\end{equation}
		where $\delta_{ij}$ is the delta-function which is $1$ if $i=j$ and $0$ otherwise. 
	\end{proposition}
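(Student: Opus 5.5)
The statement to prove is Proposition~\ref{prop:orthmatrixentries}, the Schur orthogonality relations for matrix entries of irreducible representations with respect to the bilinear form $\langle f_1\mid f_2\rangle_G$. I would prove it by averaging an arbitrary linear map over the group and then applying Schur's lemma. Fix irreducible unitary representations $\rho$ on $V$ (dimension $m$) and $\tau$ on $W$ (dimension $n$). For an arbitrary linear map $A: W\to V$, define
$$\Phi(A) = \frac{1}{|G|}\sum_{g\in G}\rho(g)\, A\, \tau(g^{-1}).$$
Reindexing $g\mapsto hg$ shows $\rho(h)\Phi(A) = \Phi(A)\tau(h)$ for all $h\in G$, so $\Phi(A)$ intertwines $\tau$ and $\rho$.

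Next I would invoke Schur's lemma, which I would prove quickly if it is not taken as known. The kernel and image of an intertwiner are invariant subspaces, so by irreducibility an intertwiner is either $0$ or an isomorphism. Moreover, an intertwiner from $\rho$ to itself has an eigenvalue $\lambda$ over $\C$, and subtracting $\lambda I$ gives a non-invertible intertwiner. Hence every self-intertwiner is a scalar.

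For the first identity, where $\rho\not\cong\tau$, Schur's lemma forces $\Phi(A)=0$ for every $A$. I would take $A = E_{jk}$, the matrix unit with a $1$ in position $(j,k)$. Then the $(i,l)$ entry of $\Phi(E_{jk})$ is
$$\frac{1}{|G|}\sum_g \rho(g)_{ij}\,\tau(g^{-1})_{kl} = \langle (\rho)_{ij}\mid (\tau)_{kl}\rangle_G,$$
and this vanishes, which gives equation~\eqref{eq:diffG_Product}.

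For the second identity, take $\tau=\rho$. Schur's lemma gives $\Phi(E_{jk}) = c_{jk} I$ for some scalar $c_{jk}$. Taking traces, conjugation preserves trace, so $\trace\,\Phi(E_{jk}) = \trace E_{jk} = \delta_{jk}$. Since $\trace(c_{jk}I) = m\,c_{jk}$, we get $c_{jk} = \delta_{jk}/\dim(\rho)$. Reading off the $(i,l)$ entry of $c_{jk}I$ then gives $\delta_{il}\delta_{jk}/\dim(\rho)$, which is equation~\eqref{eq:sameG_Product}.

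The main subtlety is the complex-scalar step in Schur's lemma, which needs an eigenvalue to exist and therefore uses that $\C$ is algebraically closed. Beyond that, the only thing to check carefully is the index bookkeeping in the $(i,l)$ entry. Unitarity is never actually used, because the form is defined with $g^{-1}$ rather than with a complex conjugate.
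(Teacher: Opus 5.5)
Your proof is correct: averaging the matrix unit $E_{jk}$ over the group produces an intertwiner, Schur's lemma forces it to be $0$ or a scalar, and the trace computation fixes the scalar as $\delta_{jk}/\dim(\rho)$. The paper gives no proof of its own and defers to Terras's book, where the orthogonality relations are established by this same averaging-plus-Schur's-lemma argument, so your proposal matches the intended proof.
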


	\subsubsection{Fourier analysis on non-abelian group}
	In this paper, we will be interested in studying $L^2(G)$, the space of functions from
	a finite group $G$ to the complex numbers $\C$. 
	
	\begin{definition}
		\label{def:norm}
		Define the inner product $\langle \cdot, \cdot\rangle_{L^2(G)}$ on $L^2(G)$ by 
		$$ \langle f, g \rangle_{L^2(G)} = \E_{x\in G} [f(x) \comp{g(x)}].$$
	\end{definition}

	We can define a character for every representation of a group.
	\begin{definition} 
		The character of a representation $\rho$ is the function $\chi_\rho : G \rightarrow \C$ defined by 
		$\chi_\rho(g) = \trace(\rho(g))$.
	\end{definition}

	The following proposition shows that the characters corresponding to the irreducible representations of a group are orthogonal to each other.
	
	\begin{proposition}[Orthogonality of characters]
		\label{prop:orthochar}
		For $\rho, \tau\in \irr(G)$, we have
		\begin{equation*}
			\frac{1}{|G|}\sum_{g\in G}\chi_{\rho}(g) \overline{\chi_{\tau}(g)}= \begin{cases}
				1\quad \rho_V \cong \tau_V,\\
				0 \quad otherwise.
			\end{cases}
		\end{equation*}
		
	\end{proposition}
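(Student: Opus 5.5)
The plan is to derive the orthogonality of characters directly from the orthogonality of matrix entries in Proposition~\ref{prop:orthmatrixentries}, using unitarity to trade complex conjugation for group inversion. Since we always work with unitary representations, for every $g\in G$ we have $\rho(g^{-1}) = \rho(g)^{-1} = \ctranspose{\rho(g)}$. Taking traces gives $\overline{\chi_\tau(g)} = \trace(\ctranspose{\tau(g)}) = \trace(\tau(g^{-1})) = \chi_\tau(g^{-1})$. Hence the left-hand side of the statement equals $\frac{1}{|G|}\sum_{g} \chi_\rho(g)\chi_\tau(g^{-1})$, which is exactly the symmetric bilinear form $\langle \chi_\rho \mid \chi_\tau\rangle_G$ of Proposition~\ref{prop:orthmatrixentries}.

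Next I expand both characters as sums of diagonal entries, $\chi_\rho = \sum_{i} (\rho)_{ii}$ and $\chi_\tau = \sum_k (\tau)_{kk}$. Bilinearity of $\langle\cdot\mid\cdot\rangle_G$ then gives
\begin{equation*}
\langle \chi_\rho \mid \chi_\tau\rangle_G = \sum_{i=1}^{\dim(\rho)}\sum_{k=1}^{\dim(\tau)} \langle (\rho)_{ii} \mid (\tau)_{kk}\rangle_G .
\end{equation*}

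There are now two cases.
\begin{itemize}
\item If $\rho_V\not\cong\tau_V$, every term vanishes by \eqref{eq:diffG_Product}, so the sum is $0$.
\item If $\rho_V\cong\tau_V$, then $\chi_\rho=\chi_\tau$, because the trace is invariant under the change of basis given by the isomorphism $\phi$. So we may take $\tau=\rho$. By \eqref{eq:sameG_Product} with $j=i$ and $l=k$, each term equals $\delta_{ik}\delta_{ik}/\dim(\rho)$. The nonzero terms are the $\dim(\rho)$ diagonal ones $i=k$, so the sum is $\dim(\rho)\cdot\frac{1}{\dim(\rho)}=1$.
\end{itemize}

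The only step needing care is the first: identifying the Hermitian inner product with the bilinear form $\langle\cdot\mid\cdot\rangle_G$. This is where unitarity of the chosen representations is used; for a non-unitary representation one would first conjugate it to a unitary one, which leaves the character unchanged. The remaining steps are bookkeeping with the Kronecker deltas.
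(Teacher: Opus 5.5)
Your proof is correct: unitarity gives $\overline{\chi_\tau(g)}=\chi_\tau(g^{-1})$, which turns the Hermitian average into $\langle\chi_\rho\mid\chi_\tau\rangle_G$, and summing the relations of Proposition~\ref{prop:orthmatrixentries} over diagonal entries yields $0$ or $\dim(\rho)\cdot\frac{1}{\dim(\rho)}=1$. The paper gives no proof of its own and defers to \cite{Terras99}, where character orthogonality is derived from the Schur orthogonality relations in exactly this way, so your route is the intended one.
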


	We use \Cref{prop:orthmatrixentries} many times in the proof. For convenience, we note an important identity that follows from  \Cref{prop:orthmatrixentries} (by setting $\tau$ to be the trivial map $\trivrep$).
	\begin{proposition}
		\label{prop:summapzero}
		If $\rho\in \irr(G) \hspace{-3pt}\setminus \hspace{-3pt}\trivrep$, $\sum_{g\in G} \rho(g) = 0$. 
	\end{proposition}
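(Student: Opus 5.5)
Since $\rho$ is irreducible and nontrivial, I would derive this directly from Proposition~\ref{prop:orthmatrixentries}, taking $\tau=\trivrep$ there. The trivial representation is one-dimensional with its single entry identically $1$, and $\rho\not\cong\trivrep$ by assumption. Equation~\eqref{eq:diffG_Product} therefore gives, for every pair of indices $i,j$,
\[
0=\langle (\rho)_{ij}\mid (\trivrep)_{11}\rangle_G=\frac{1}{|G|}\sum_{g\in G}\rho(g)_{ij}\cdot 1 .
\]
This says exactly that every matrix entry of $\sum_{g\in G}\rho(g)$ vanishes, which proves the claim.

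If one prefers not to lean on the full orthogonality relations, there is a short independent argument. Set $M=\sum_{g\in G}\rho(g)$. Reindexing the sum shows $\rho(h)M=M$ for every $h\in G$, so the image of $M$ is a $G$-invariant subspace on which $G$ acts trivially.

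Irreducibility leaves two options: this image is $\{0\}$ or all of $\rho_V$. In the second case $G$ would act trivially on all of $\rho_V$. Then every subspace would be invariant, which forces $\dim\rho=1$ and $\rho=\trivrep$, contradicting the assumption. Hence the image is $\{0\}$ and $M=0$.

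Neither argument has a real obstacle. The only care needed is checking that the bilinear form $\langle\cdot\mid\cdot\rangle_G$ with $g^{-1}$ in the second slot causes no trouble. It does not, because the trivial representation takes the value $1$ at $g^{-1}$ just as at $g$.
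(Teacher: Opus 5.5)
Your first argument is exactly the paper's: the paper obtains this identity by taking $\tau=\trivrep$ in Proposition~\ref{prop:orthmatrixentries}, which gives $\frac{1}{|G|}\sum_{g\in G}\rho(g)_{ij}=0$ for all $i,j$. Your alternative argument is also correct and self-contained: the image of $M=\sum_{g}\rho(g)$ is fixed pointwise by $G$, so irreducibility together with $\rho\not\cong\trivrep$ forces $M=0$.
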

	
	We have the following proposition. It also shows that the maximum dimension of any irreducible representation of $G$ is at most $\sqrt{G}$.
	\begin{proposition}
		\label{prop:sumdimsquare}
		$$\sum\limits_{\rho \in \irr(G)} \dim(\rho)\chi_\rho(g) = \begin{cases}
			|G| \quad g= 1_G,\\
			0 \quad otherwise.
		\end{cases} $$
		This implies the following:
		$$\sum_{\rho\in \irr(G)} \dim(\rho)^2 = |G|.$$
	\end{proposition}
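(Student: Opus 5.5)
The plan is to compute the character of the \emph{regular representation} in two ways: once directly, and once through its decomposition into irreducibles.

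First, define the regular representation $\rho_{\mathrm{reg}}$ of $G$. Its space is $\C^{G}$, with basis vectors $e_h$ for $h\in G$, and $\rho_{\mathrm{reg}}(g)e_h = e_{g\gp h}$. Each $\rho_{\mathrm{reg}}(g)$ is a permutation matrix, so this is a unitary representation. Its trace counts the fixed points of the map $h\mapsto g\gp h$. If $g\neq 1_G$ there are no fixed points, since $g\gp h = h$ forces $g=1_G$. If $g = 1_G$, every $h$ is fixed. Hence
$$\chi_{\mathrm{reg}}(g) = \begin{cases} |G| & g=1_G,\\ 0 & \text{otherwise.}\end{cases}$$

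Second, decompose $\rho_{\mathrm{reg}}$ as in the paper: $\rho_{\mathrm{reg}}\cong \oplus_{\rho\in\irr(G)} n_\rho\,\rho$. The trace is additive over block-diagonal sums, so $\chi_{\mathrm{reg}}=\sum_\rho n_\rho \chi_\rho$. Applying the orthogonality of characters (\Cref{prop:orthochar}) isolates each multiplicity:
$$n_\rho = \frac{1}{|G|}\sum_{g\in G}\chi_{\mathrm{reg}}(g)\overline{\chi_\rho(g)}.$$
Only the term $g=1_G$ survives, and $\chi_\rho(1_G)=\trace(I)=\dim(\rho)$ is real. This gives $n_\rho = \frac{1}{|G|}\cdot|G|\cdot\dim(\rho)=\dim(\rho)$. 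Substituting back yields $\sum_\rho \dim(\rho)\chi_\rho(g)=\chi_{\mathrm{reg}}(g)$, which is the first claimed identity.

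Third, evaluate that identity at $g=1_G$ to get $\sum_\rho\dim(\rho)^2=|G|$. The bound $\dim(\rho)\leq\sqrt{|G|}$ follows because every term in this sum is nonnegative.

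The only step needing care is the second one. It relies on complete reducibility and on the orthogonality relation to read off multiplicities, both of which the paper takes as given. I do not expect a genuine obstacle.
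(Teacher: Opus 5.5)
Your proof is correct: computing the character of the regular representation directly, then again through its decomposition into irreducibles with multiplicities $n_\rho=\dim(\rho)$ read off by orthogonality of characters, gives both identities, and the bound $\dim(\rho)\leq\sqrt{|G|}$ follows as you say. The paper gives no proof of its own and instead cites Terras's book, where the result is proved by essentially this same regular-representation argument.
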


	\begin{definition}
		\label{def:conv}
		For two functions $f, g\in L^2(G)$ their {\em convolution} $f\conv g \in L^2(G)$ is defined as 
		$$(f\conv g)(x) := \E_{y\in G}[f(y)g(y^{-1}x)].$$
	\end{definition}
	
	For an abelian group, any function $f : G \rightarrow \C$ can be written as linear combinations of characters, i.e., the characters span the whole space $L^2(G)$. However, for non-abelian groups, characters form an orthonormal basis only for the set of {\em class functions} -- maps which are constant on {\em conjugacy classes}. A conjugacy class in $G$ is a nonempty subset $H$ of $G$ such that the following two conditions hold: 	Given any $x,y \in H$, there exists $g \in G$ such that $gxg^{-1} = y$, and if $x \in H$ and $g \in G$ then $gxg^{-1} \in H$. Since this is an equivalence class, any group is a collection of disjoint conjugacy classes.

	As in the Abelian case, we can understand operations like inner product, convolution, etc., using the Fourier transform, which is defined as follows:
	
	\begin{definition}
		For a function $f\in L^2(G)$, define the Fourier transform of $f$ to be the element $\hat{f} \in \prod_{\rho\in \irr(G)} \End{\rho_V}$ given  by  
		$$\hat{f}(\rho) = \E_{x\in G}[ f(x) \rho(x)] \in \End{\rho_V}.$$
	\end{definition}

	\begin{definition}
		Let $V$ be a finite-dimensional complex inner product space. Define an inner product  $\langle \cdot , \cdot\rangle_{\End{V}}$ on $\End{V}$ by
		$$ \langle A, B\rangle_{\End{V}} = \trace(A\ctranspose{B}).$$
	\end{definition}
	
	We can now state the Fourier inversion theorem.
	\begin{proposition}[Fourier inversion theorem]
		For $f\in L^2(G)$ we have
		$$f(x) = \sum_{\rho\in \irr(G)} \dim(\rho)\cdot \langle \hat{f}(\rho) , \rho(x)\rangle_{\End{\rho_V}}.$$
	\end{proposition}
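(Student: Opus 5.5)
The plan is to verify the identity directly: expand the right-hand side using the definition of the Fourier transform, then collapse the sum over irreducible representations with \Cref{prop:sumdimsquare}. Because both sides are linear in $f$, no decomposition of $f$ is needed.

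\textbf{Step 1: reduce the inner product to a character value.} Fix $x\in G$ and $\rho\in\irr(G)$. Recall that every representation is taken to be unitary, so $\ctranspose{\rho(x)}=\rho(x)^{-1}=\rho(x^{-1})$. By the definition of $\langle\cdot,\cdot\rangle_{\End{\rho_V}}$ and linearity of trace and expectation,
\begin{align*}
\langle \hat{f}(\rho),\rho(x)\rangle_{\End{\rho_V}} = \trace\big(\hat f(\rho)\,\ctranspose{\rho(x)}\big) = \E_{y\in G}\big[f(y)\,\trace(\rho(y)\rho(x^{-1}))\big] = \E_{y\in G}\big[f(y)\,\chi_\rho(y\gp x^{-1})\big].
\end{align*}
The last equality uses that $\rho$ is a homomorphism, so $\rho(y)\rho(x^{-1})=\rho(y\gp x^{-1})$.

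\textbf{Step 2: collapse the sum over representations.} Multiply the expression from Step 1 by $\dim(\rho)$ and sum over $\rho\in\irr(G)$. Exchanging the finite sum with the expectation gives
\begin{align*}
\sum_{\rho\in\irr(G)}\dim(\rho)\,\langle \hat{f}(\rho),\rho(x)\rangle_{\End{\rho_V}} = \E_{y\in G}\Big[f(y)\sum_{\rho\in\irr(G)}\dim(\rho)\,\chi_\rho(y\gp x^{-1})\Big].
\end{align*}
By \Cref{prop:sumdimsquare}, the inner sum equals $|G|$ when $y\gp x^{-1}=1_G$, that is when $y=x$, and equals $0$ otherwise. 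Hence the expectation is $\frac{1}{|G|}\cdot |G|\cdot f(x)=f(x)$, which is the claimed identity.

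\textbf{Main obstacle.} There is no real difficulty here. The only point needing care is that the paper's convention $\langle A,B\rangle=\trace(A\ctranspose{B})$ must be used together with unitarity. This is what turns $\ctranspose{\rho(x)}$ into $\rho(x^{-1})$, so that the trace becomes $\chi_\rho(y\gp x^{-1})$ rather than a conjugated or transposed quantity. Any other convention would change which group element appears inside the character, and the delta function from \Cref{prop:sumdimsquare} would then not isolate $y=x$.
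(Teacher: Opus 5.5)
Your verification is correct under the paper's conventions ($\hat f(\rho)=\E_{x}[f(x)\rho(x)]$, $\langle A,B\rangle=\trace(A\ctranspose{B})$, unitary $\rho$). The inner product becomes $\E_{y}[f(y)\,\chi_\rho(y\gp x^{-1})]$, and \Cref{prop:sumdimsquare} isolates $y=x$.

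The paper gives no proof of this proposition and simply defers to Terras, so there is no competing route to compare with. Yours is the standard textbook argument. It is usually phrased by first reducing, via linearity, to indicators of single elements; your computation handles general $f$ in one stroke.

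Be aware that \Cref{prop:sumdimsquare} is itself exactly the special case of the inversion formula where $f$ is the indicator of $\{1_G\}$, up to complex conjugation. So what you have really shown is that this special case, together with the homomorphism property and unitarity, implies the general one. The substantive input, the decomposition of the regular representation, sits inside that cited proposition.

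A side benefit of your computation is that each summand depends only on $\chi_\rho$. Hence the formula does not depend on which unitary representative of each class in $\irr(G)$ is chosen, which the statement implicitly needs.

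One small quibble with your closing remark. Under the convention $\trace(\ctranspose{A}B)$ the delta still isolates $y=x$, but you would recover $\overline{f(x)}$ rather than $f(x)$. The real point is simply that the convention must match the one the paper fixes.
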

	
	We have the following simple identities (See \cite{Terras99} for the proofs).
	\begin{proposition}[Plancherel's identity]
		\label{prop:plancherels}
		$$ \langle f, g \rangle_{L^2(G)} = \sum_{\rho\in \irr(G)} \dim(\rho)\cdot  \langle \hat{f}(\rho) , \hat{g}(\rho)\rangle_{\End{\rho_V}}.$$
	\end{proposition}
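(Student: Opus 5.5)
Plancherel's identity follows from the Fourier inversion theorem together with the orthogonality relations of Proposition~\ref{prop:orthmatrixentries}. The plan is to expand $g$ (or $f$) via Fourier inversion, substitute into the definition of $\langle f,g\rangle_{L^2(G)}$, and use linearity to reduce the right-hand side to the defining expression for $\hat f(\rho)$.

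Concretely, I would write $\overline{g(x)}$ using inversion. By the inversion theorem,
\[
g(x)=\sum_{\rho}\dim(\rho)\,\trace\bigl(\hat g(\rho)\,\ctranspose{\rho(x)}\bigr).
\]
Conjugating gives
\[
\overline{g(x)}=\sum_\rho \dim(\rho)\,\trace\bigl(\rho(x)\,\ctranspose{\hat g(\rho)}\bigr),
\]
using $\overline{\trace(A)}=\trace(\ctranspose{A})$ and $\ctranspose{(AB^\star)}=BA^\star$. Then
\[
\langle f,g\rangle_{L^2(G)}=\E_x\Bigl[f(x)\sum_\rho\dim(\rho)\,\trace\bigl(\rho(x)\ctranspose{\hat g(\rho)}\bigr)\Bigr]
=\sum_\rho\dim(\rho)\,\trace\Bigl(\E_x[f(x)\rho(x)]\,\ctranspose{\hat g(\rho)}\Bigr).
\]
The second equality exchanges the finite sum and expectation and uses linearity of the trace. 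By definition $\E_x[f(x)\rho(x)]=\hat f(\rho)$, so each summand is $\dim(\rho)\,\trace(\hat f(\rho)\ctranspose{\hat g(\rho)})=\dim(\rho)\langle\hat f(\rho),\hat g(\rho)\rangle_{\End{\rho_V}}$, which is the claim.

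The only substantive ingredient is the inversion theorem itself, which I take as given from the earlier proposition. If one instead wanted to prove it from scratch, the main obstacle would be showing that the normalized matrix entries $\sqrt{\dim\rho}\,\rho_{ij}$ form an orthonormal basis of $L^2(G)$. Orthonormality would come from Proposition~\ref{prop:orthmatrixentries}, using unitarity to rewrite $\rho(x^{-1})=\ctranspose{\rho(x)}$ and so convert the bilinear form into the Hermitian inner product. Spanning would follow from the dimension count $\sum_\rho\dim(\rho)^2=|G|$ in Proposition~\ref{prop:sumdimsquare}. Given that basis, Plancherel is simply Parseval's identity in this basis, and it agrees with the computation above.
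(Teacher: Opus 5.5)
Your argument is correct. Conjugating the inversion formula for $g$ gives $\overline{g(x)}=\sum_\rho \dim(\rho)\,\trace\bigl(\rho(x)\ctranspose{\hat g(\rho)}\bigr)$, and moving $\E_x[f(x)\,\cdot\,]$ inside the trace yields $\hat f(\rho)$ directly.

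The paper gives no proof of this proposition; it defers to Terras for all these standard facts. Your derivation from the stated inversion theorem is the standard one.

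One small remark: your opening sentence cites the orthogonality relations, but the main computation uses only inversion and the definition of $\hat f$. Orthogonality is needed only if you prove inversion from scratch, as in your last paragraph.
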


	\begin{proposition}[Parseval's identity]
		\label{prop:parsevals}
		$$ \E_{x\in G}[|f(x)|^2] = \sum_{\rho \in \irr(G)} \dim(\rho)\cdot \hsnorm{\hat{f}(\rho)}^2,$$
		where $\hsnorm{A}:= \sqrt{\langle A, A\rangle_{\End{V}}} = \sqrt{\trace(A\ctranspose{A})}= \sqrt{\sum_{ij} |A_{ij}|^2}$.
	\end{proposition}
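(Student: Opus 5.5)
The plan is to obtain Parseval's identity as the special case $g=f$ of Plancherel's identity (Proposition~\ref{prop:plancherels}). Then we only need to check that the two sides match the two sides of the statement. On the left, $\langle f, f\rangle_{L^2(G)} = \E_{x\in G}[f(x)\comp{f(x)}] = \E_{x\in G}[|f(x)|^2]$ by Definition~\ref{def:norm}. On the right, each summand becomes $\dim(\rho)\cdot\langle \hat f(\rho), \hat f(\rho)\rangle_{\End{\rho_V}} = \dim(\rho)\cdot \trace(\hat f(\rho)\ctranspose{\hat f(\rho)})$. This is $\dim(\rho)\cdot \hsnorm{\hat f(\rho)}^2$ by the definition of the Hilbert--Schmidt norm.

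The last equality in the statement, $\trace(AA^\star)=\sum_{ij}|A_{ij}|^2$, is a one-line computation: $(AA^\star)_{ii}=\sum_j A_{ij}\comp{A_{ij}}$, and then we sum over $i$.

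If Plancherel itself is not taken as given, I would prove it directly, and that is the only step with content. I would expand $\langle \hat f(\rho), \hat g(\rho)\rangle_{\End{\rho_V}} = \E_{x,y}[f(x)\comp{g(y)}\,\trace(\rho(x)\rho(y)^\star)]$. Unitarity gives $\rho(y)^\star=\rho(y^{-1})$, so the trace is $\chi_\rho(x y^{-1})$. Summing against $\dim(\rho)$ and applying Proposition~\ref{prop:sumdimsquare} gives $\sum_\rho \dim(\rho)\chi_\rho(xy^{-1}) = |G|\cdot\mathbbm{1}[x=y]$. The double expectation then collapses to $\frac{1}{|G|^2}\cdot |G|\sum_x f(x)\comp{g(x)} = \langle f,g\rangle_{L^2(G)}$.

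The main point of care is the unitarity assumption on the representations; the paper fixes unitary representations, so the step $\rho(y)^\star=\rho(y^{-1})$ is justified. Everything else is bookkeeping of normalizations.
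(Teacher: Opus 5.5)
Your proposal is correct. The paper gives no proof of its own and defers to Terras~\cite{Terras99}; specializing Plancherel (Proposition~\ref{prop:plancherels}) to $g=f$ and using $\trace(AA^\star)=\sum_{ij}|A_{ij}|^2$ is the standard argument the paper relies on. Your optional direct proof of Plancherel is also sound: unitarity gives $\trace(\rho(x)\rho(y)^\star)=\chi_\rho(xy^{-1})$, and Proposition~\ref{prop:sumdimsquare} collapses $\sum_\rho \dim(\rho)\chi_\rho(xy^{-1})$ to $|G|\cdot\mathbbm{1}[x=y]$, consistent with the paper's normalization $\hat f(\rho)=\E_x[f(x)\rho(x)]$.
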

	
	Note that the norm $\hsnorm{\cdot}$ satisfies a triangle inequality.
	\begin{claim}
		\label{claim:normteq}
		$\hsnorm{AB}\leq \hsnorm{A}\cdot \hsnorm{B}.$
	\end{claim}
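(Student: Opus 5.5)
The claim follows from the Cauchy--Schwarz inequality applied entrywise to the matrix product, followed by summing over all entries. It is really submultiplicativity rather than a triangle inequality, but that is what is used later. I will work in coordinates, viewing $A$ and $B$ as $d\times d$ complex matrices, for instance the Fourier coefficients $\hat f(\rho)\in\End{\rho_V}$ with $d=\dim(\rho)$. I will use the description $\hsnorm{M}^2=\sum_{i,j}|M_{ij}|^2$ from Proposition~\ref{prop:parsevals}.

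\textbf{Step 1 (entrywise bound).} Write $(AB)_{ij}=\sum_{k}A_{ik}B_{kj}$. Cauchy--Schwarz in $\C^d$ gives
\[
|(AB)_{ij}|^2 \leq \Big(\sum_k |A_{ik}|^2\Big)\Big(\sum_k |B_{kj}|^2\Big).
\]
The first factor is the squared norm of the $i$-th row of $A$, and the second is the squared norm of the $j$-th column of $B$.

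\textbf{Step 2 (summation).} Sum the bound over all pairs $(i,j)$. The right-hand side factors as
\[
\Big(\sum_{i,k}|A_{ik}|^2\Big)\Big(\sum_{k,j}|B_{kj}|^2\Big)=\hsnorm{A}^2\hsnorm{B}^2,
\]
while the left-hand side is $\hsnorm{AB}^2$. Taking square roots gives the claim.

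There is no real obstacle. The only point to check is that Cauchy--Schwarz is applied with the moduli $|A_{ik}|$ and $|B_{kj}|$, which is valid over $\C$ by the triangle inequality for complex numbers. The argument never uses unitarity or the group structure, so the claim holds for arbitrary complex matrices of compatible sizes, and in particular for products of Fourier coefficients and representation matrices.
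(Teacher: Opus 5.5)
Your proof is correct and is the paper's argument: the paper also bounds $|(AB)_{ij}|$ by $\sum_k |A_{ik}||B_{kj}|$, applies Cauchy--Schwarz to the inner sum, and factors the double sum into $\hsnorm{A}^2\hsnorm{B}^2$. You are also right that the paper's phrase ``triangle inequality'' is a misnomer, since the claim is submultiplicativity.
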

	\begin{proof}
		$
		\hsnorm{AB}^2 = \sum_{ij} |(AB)_{ij}|^2 \leq \sum_{ij} \left(\sum_{k} |A_{ik} B_{kj}|\right)^2 
		$.
		Using the Cauchy-Schwarz inequality on the inner sum,
		\begin{align*}
			\hsnorm{AB}^2 	\leq  \sum_{ij} \left(\sum_{k} |A_{ik}|^2 \right)\left(\sum_{l} |B_{l j}|^2\right) = \sum_{ijkl} |A_{ik}|^2 |B_{l j}|^2 =  \left(\sum_{ik} |A_{ik}|^2 \right) \left(\sum_{l j} |B_{l j}|^2 \right) = \hsnorm{A}^2 \cdot \hsnorm{B}^2.
		\end{align*}
	\end{proof}

	% \begin{claim}
	% 	\label{claim:Uni_norm}
	% 	Let $A$ be any matrix and $U$ be any unitary matrix, then $\hsnorm{UA} = \hsnorm{A}$.
	% \end{claim}
	% \begin{proof}
	% 	Let $V$ be a unitary matrix which converts $U$ to the identity matrix, i.e., $VUV^\star = I$. Since the change of basis does not change the $\hsnorm{\cdot}$, we have
	% 	$$\hsnorm{UA} = \hsnorm{VUAV^\star} = \hsnorm{VUV^\star VAV^\star} = \hsnorm{I VAV^\star} = \hsnorm{A}.$$
	% \end{proof}
	
	\begin{proposition}[Convolution theorem]
		For $f, g\in L^2(G)$ we have 
		$$\widehat{f\conv g}(\rho) = \hat{f}(\rho) \hat{g}(\rho).$$
	\end{proposition}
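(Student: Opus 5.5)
The plan is to compute $\widehat{f\conv g}(\rho)$ directly from the definitions of the Fourier transform and of convolution (Definition~\ref{def:conv}). Then I will perform a change of variables that separates the two functions, and use that $\rho$ is a group homomorphism to split $\rho$ of a product into a product of matrices. No earlier proposition beyond these definitions should be needed.

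Concretely, for a fixed $\rho\in\irr(G)$ I will write
\begin{align*}
\widehat{f\conv g}(\rho) &= \E_{x\in G}\left[(f\conv g)(x)\,\rho(x)\right] = \E_{x\in G}\,\E_{y\in G}\left[f(y)\,g(y^{-1}\gp x)\,\rho(x)\right].
\end{align*}
For each fixed $y$, the map $x\mapsto z:=y^{-1}\gp x$ is a bijection of $G$. So $x$ uniform on $G$ corresponds to $z$ uniform on $G$, independent of $y$, with $x=y\gp z$. Substituting and using $\rho(y\gp z)=\rho(y)\rho(z)$ gives
\begin{align*}
\widehat{f\conv g}(\rho) &= \E_{y,z\in G}\left[f(y)\,g(z)\,\rho(y)\rho(z)\right] = \E_{y}\left[f(y)\rho(y)\right]\cdot\E_{z}\left[g(z)\rho(z)\right] = \hat{f}(\rho)\,\hat{g}(\rho).
\end{align*}
In the middle step the expectation of a product of independent matrix-valued quantities factors by linearity, since the scalars $f(y),g(z)$ commute with everything.

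There is no real obstacle here. The only point requiring care is the order of the matrix factors, because $G$ is non-abelian and $\rho(y)\rho(z)\neq\rho(z)\rho(y)$ in general. The convention $(f\conv g)(x)=\E_y[f(y)g(y^{-1}x)]$ places $y$ on the left of $x=y\gp z$. This is exactly what makes the factor attached to $f$ appear on the left, yielding $\hat f(\rho)\hat g(\rho)$ rather than the reversed product. I will check this ordering explicitly when substituting.
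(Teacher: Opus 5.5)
Your proof is correct: the substitution $z=y^{-1}x$ makes $(y,z)$ uniform on $G\times G$, and since $y$ sits on the left in $x=yz$, the homomorphism property gives $\hat f(\rho)\hat g(\rho)$ in the stated order. The paper gives no proof of its own and simply cites Terras; your direct computation is the standard argument found there.
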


	\subsection{Important claims}
	In this section, we prove a few statements that will be used in the soundness analysis. The following claim shows that the character functions always come in `pairs' with respect to the complex conjugation.
	\begin{claim}
		\label{claim:dim1conj}
		Let $G$ be any non-abelian group. For every $\rho \in \irr(G)$, such that $\dim(\rho)=1$, there exists $\widetilde{\rho}\in \irr(G)$ with $\dim(\widetilde{\rho}) = 1$ such that 
		$$ \chi_\rho(g) = \overline{\chi_{\widetilde{\rho}}(g)}, \quad\quad \forall g\in G.$$
	\end{claim}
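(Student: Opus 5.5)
The plan is to define $\widetilde{\rho}$ as the complex conjugate of $\rho$ and check directly that it has the required properties. Since $\dim(\rho)=1$, the representation $\rho$ is a group homomorphism $\rho: G\rightarrow \GL(\C)=\C^\times$, and its character is $\chi_\rho(g)=\trace(\rho(g))=\rho(g)$. Define $\widetilde{\rho}: G\rightarrow \C^\times$ by $\widetilde{\rho}(g) := \overline{\rho(g)}$.

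\textbf{Step 1: $\widetilde{\rho}$ is a representation.} Complex conjugation is a field automorphism of $\C$, so it respects multiplication:
$$\widetilde{\rho}(g\gp h)=\overline{\rho(g)\rho(h)}=\overline{\rho(g)}\cdot\overline{\rho(h)}=\widetilde{\rho}(g)\,\widetilde{\rho}(h).$$
Moreover $\overline{\rho(g)}\neq 0$, so $\widetilde{\rho}$ takes values in $\C^\times$ and is a homomorphism. We may also note that $\rho$ is unitary, so $|\rho(g)|=1$. Hence $\widetilde{\rho}(g)=\rho(g)^{-1}=\rho(g^{-1})$, which is consistent with the above but not needed.

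\textbf{Step 2: $\widetilde{\rho}$ is irreducible of dimension $1$.} Any one-dimensional representation is irreducible, because a one-dimensional space has no subspaces other than $\{0\}$ and itself. Therefore $\widetilde{\rho}\in\irr(G)$, up to isomorphism, and $\dim(\widetilde{\rho})=1$.

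\textbf{Step 3: the character identity.} Since the trace of a $1\times 1$ matrix is its single entry, $\chi_{\widetilde{\rho}}(g)=\widetilde{\rho}(g)=\overline{\chi_\rho(g)}$. Conjugating both sides gives $\chi_\rho(g)=\overline{\chi_{\widetilde{\rho}}(g)}$ for all $g\in G$, as required.

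None of these steps is a real obstacle. The only point needing care is that $\irr(G)$ consists of isomorphism classes, so $\widetilde{\rho}$ is identified with its class. The claim does not require $\widetilde{\rho}\neq\rho$; indeed $\widetilde{\rho}=\rho$ when $\rho$ is real-valued, for example when $\rho$ is trivial.
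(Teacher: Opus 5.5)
Your proof is correct, and it takes a more direct route than the paper's. The paper passes to the abelian quotient $G/\commutator{G}{G}$. It observes that every one-dimensional $\rho$ is constant on cosets of the commutator subgroup, so $\rho$ induces a multiplicative character $\Gamma_\rho$ of $G/\commutator{G}{G}$. From this it concludes that the one-dimensional characters form a group under pointwise multiplication. The claim then follows by taking $\widetilde{\rho}$ to be the inverse of $\rho$ in that group. This uses an unstated step: $\chi_\rho^{-1}=\overline{\chi_\rho}$, because the values lie on the unit circle.

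You instead define $\widetilde{\rho}:=\overline{\rho}$ outright and get the homomorphism property from the multiplicativity of complex conjugation. This needs neither the commutator subgroup, nor the group structure, nor even $|\rho(g)|=1$. Your argument is therefore shorter, and it does not depend on the step the paper leaves implicit.

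What the paper's route buys is the identification of the one-dimensional characters with the dual group of $G/\commutator{G}{G}$. That structure reappears later, for instance when $\widehat{G/H_S}$ is treated as a subgroup and when products of one-dimensional representations are formed in Lemma~\ref{lemma:folding} and Claim~\ref{claim:main_S_noise_claim}. For the claim itself it is unnecessary. Neither argument uses the hypothesis that $G$ is non-abelian.
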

	\begin{proof}
		We claim that the set of characters corresponding to dimension $1$ irreducible representations of $G$ forms a group under point-wise multiplication. This will be enough to show the claim.

		Let $G' = G/\commutator{G}{G}$ be the abelian quotient group. Assume $\rho$ is a degree $1$ representation of $G$. Then it satisfies $\rho(a)\rho(b) = \rho(ab)$ for all $a, b\in G$. Define a map $\Gamma_\rho : G' \rightarrow \C$ as $\Gamma_\rho(g') = \rho(g)$ where $g' = g\commutator{G}{G}$. This is a well-defined map as
		$$\rho(aba^{-1}b^{-1}) = \rho(a)\rho(b) \rho(a^{-1})\rho(b^{-1}) = \rho(a)\rho(a^{-1})\rho(b)\rho(b^{-1}) = 1.$$ 
		Thus, the map $\rho$ is constant on every coset of $\commutator{G}{G}$ and hence $\Gamma_\rho$ is well defined. The set of all $\{\Gamma_\rho \mid \rho \in \irr(G), \dim(\rho) = 1\}$ is the set of all the multiplicative characters of the abelian group $G'$ and hence forms a group under coordinate-wise multiplication. There is a one-to-one correspondence between the coordinate-wise multiplicative action of $\Gamma_\rho$'s and $\rho$'s. Thus, $\{\chi_\rho \mid \rho \in \irr(G), \dim(\rho) = 1\}$ form a group under point-wise multiplication.
	\end{proof}

	\subsection{Functions on $G^n$}
	
	For any non-abelian group $G$ and $n\geq 1$, we have a group $G^n$ where the group operation is defined coordinate-wise.
	The irreducible representations of $G^n$ are precisely those representations obtained by taking tensor products of $n$ irreducible representations of $G$.
	\begin{proposition}[\cite{Terras99}]
		\label{prop:irr_Gn}
		The set of irreducible representations of $G^n$ is given by 
		$$\irr(G^n) = \{ \alpha \mid  \alpha = \otimes_{i\in [n]} \rho_i \mbox{ where } \rho_i \in \irr(G)\}.$$
	\end{proposition}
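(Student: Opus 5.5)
The plan is to prove the statement by a character computation combined with a dimension count. Set $\alpha = \otimes_{i\in[n]}\rho_i$ with $\rho_i\in\irr(G)$. This is a representation of $G^n$ on $\otimes_i (\rho_i)_V$, given by $\alpha(g_1,\dots,g_n) = \rho_1(g_1)\otimes\cdots\otimes\rho_n(g_n)$. It is a homomorphism because the tensor (Kronecker) product of matrices is multiplicative: $(A\otimes B)(C\otimes D) = AC\otimes BD$. It is unitary whenever each $\rho_i$ is unitary. Since the trace of a Kronecker product is the product of the traces, its character factorizes as
$$\chi_\alpha(g_1,\dots,g_n) = \prod_{i\in[n]}\chi_{\rho_i}(g_i).$$
(I will use this tensor of representations of different factors, the "external" tensor product, which is the meaning intended in the statement.)

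\textbf{Step 1 (irreducibility criterion).} I first record that a representation $\sigma$ of a finite group $K$ is irreducible iff $\frac{1}{|K|}\sum_{k\in K}|\chi_\sigma(k)|^2 = 1$. This follows from the complete decomposition $\sigma\cong\oplus_j n_j\sigma_j$ stated earlier. That decomposition gives $\chi_\sigma = \sum_j n_j\chi_{\sigma_j}$. Proposition~\ref{prop:orthochar} applied to $K$ then gives $\frac{1}{|K|}\sum_{k}|\chi_\sigma(k)|^2 = \sum_j n_j^2$, which equals $1$ iff $\sigma$ is a single irreducible. Orthogonality of characters likewise shows that two irreducibles are isomorphic iff their characters have inner product $1$ rather than $0$.

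\textbf{Step 2 (the tensor products are irreducible and pairwise distinct).} Because the character factorizes, the average over $G^n$ splits into a product of averages over $G$:
$$\frac{1}{|G|^n}\sum_{(g_i)}\chi_\alpha\overline{\chi_{\beta}} = \prod_i \frac{1}{|G|}\sum_{g\in G}\chi_{\rho_i}(g)\overline{\chi_{\tau_i}(g)},$$
where $\beta=\otimes_i\tau_i$. By Proposition~\ref{prop:orthochar} for $G$, this product is $1$ if $\rho_i\cong\tau_i$ for every $i$, and $0$ otherwise. Taking $\beta=\alpha$ and applying Step 1 shows that each $\alpha$ is irreducible. Taking distinct tuples shows the corresponding $\alpha$'s are pairwise non-isomorphic.

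\textbf{Step 3 (exhaustion).} It remains to show that no other irreducibles of $G^n$ exist. The dimensions of the tensor products satisfy
$$\sum_{(\rho_1,\dots,\rho_n)}\prod_i\dim(\rho_i)^2 = \Big(\sum_{\rho\in\irr(G)}\dim(\rho)^2\Big)^n = |G|^n,$$
using Proposition~\ref{prop:sumdimsquare} for $G$. Proposition~\ref{prop:sumdimsquare} applied to $G^n$ says that the squared dimensions of all of $\irr(G^n)$ also sum to $|G^n| = |G|^n$. The family from Step 2 consists of pairwise non-isomorphic irreducibles, so it is a subset of $\irr(G^n)$. A proper subset of $\irr(G^n)$ would have strictly smaller sum of squared dimensions, so the family is all of $\irr(G^n)$.

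The only point requiring care is Step 1, the character-norm criterion for irreducibility. That criterion is where the complete decomposition (with multiplicities) and character orthogonality for $G^n$ itself are used. Everything else is bookkeeping with the multiplicativity of traces under tensor products.
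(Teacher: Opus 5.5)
Your proof is correct and complete. The paper does not prove this proposition itself; it only cites Terras, so there is no internal argument to compare against.

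What you wrote is the standard textbook proof, essentially Serre's argument for $G_1\times G_2$. The factorization $\chi_\alpha(g_1,\dots,g_n)=\prod_i\chi_{\rho_i}(g_i)$ turns the character inner product on $G^n$ into a product of inner products on $G$. That single computation gives irreducibility (norm $1$) and pairwise non-isomorphism (inner product $0$ for distinct tuples). The identity $\sum_{(\rho_i)}\prod_i\dim(\rho_i)^2=|G|^n$, matched against Proposition~\ref{prop:sumdimsquare} for $G^n$, then leaves no room for further irreducibles.

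A nice feature of your route is that it uses only facts already recorded in the paper's preliminaries: complete reducibility, Proposition~\ref{prop:orthochar} and Proposition~\ref{prop:sumdimsquare}, applied once to $G$ and once to $G^n$.

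The usual alternative for the exhaustion step is to show that the functions $\prod_i\chi_{\rho_i}(g_i)$ span the class functions on $G^n$. Another option is to count conjugacy classes: those of $G^n$ are products of those of $G$, so $|\irr(G^n)|=|\irr(G)|^n$. Either version relies on irreducible characters forming a basis of class functions rather than on the sum-of-squares identity, and neither is simpler here.
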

	We denote $\alpha$ by the corresponding tuple $(\rho_1, \rho_2, \ldots, \rho_n)$. We define the weight of a representation $\alpha = (\rho_1, \rho_2, \ldots, \rho_n)$ (denoted by $|\alpha|$) to be the number of non-trivial representations in $(\rho_1, \rho_2, \ldots, \rho_n)$.

	We will be working with functions $f: G^n \rightarrow G$ which are {\em folded}. $f$ is said to be folded if $f(c\V x) = cf(\V x)$ for all $c\in G$ and $\V x \in G^n$. The following claim shows that for all functions $g(\V x):=\rho(f(\V x))_{ij}$ where $\dim(\rho)\geq 2$ and $1\leq i,j\leq \dim(\rho)$,  all the Fourier coefficients corresponding to representations of dimension $1$ are zero, if $f$ is folded.
	
	\begin{lemma}[Lemma 2.25 in \cite{BhangaleK21}]
		\label{lemma:folded_zero_coeff}
		Let $f:G^n \rightarrow G$ be any folded function and $g(\V x):=\rho(f(\V x))_{ij}$ where $\rho \in \irr(G), \dim(\rho)\geq 2$ and $1\leq i,j\leq \dim(\rho)$. Let $\alpha$ be {\em any} representation of $G^n$ such that $\dim(\alpha) = 1$, then $\hat{g}(\alpha) = 0$.
	\end{lemma}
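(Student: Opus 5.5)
We must show $\hat g(\alpha)=0$ whenever $\alpha$ is a one-dimensional representation of $G^n$. The plan is to use the folding property to pull out a group element, and then average over that element.

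Write $\alpha=(\tau_1,\dots,\tau_n)$ with each $\tau_i$ one-dimensional; this is possible by Proposition~\ref{prop:irr_Gn}, since $\dim(\alpha)=\prod_i\dim(\tau_i)$. Then $\alpha(\V x)=\prod_i \tau_i(x_i)$ is a scalar homomorphism of $G^n$. For $c\in G$, let $c\V x$ denote $(cx_1,\dots,cx_n)$. Set $\sigma(c):=\alpha(c,c,\dots,c)=\prod_i\tau_i(c)$; this is a one-dimensional representation (a homomorphism) of $G$, and $\alpha(c\V x)=\sigma(c)\alpha(\V x)$.

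The first step is to change variables $\V x\mapsto c\V x$ in the expectation, which preserves the uniform distribution on $G^n$. This gives
\[
\hat g(\alpha)=\E_{\V x}\bigl[\rho(f(c\V x))_{ij}\,\alpha(c\V x)\bigr]=\sigma(c)\,\E_{\V x}\bigl[(\rho(c)\rho(f(\V x)))_{ij}\,\alpha(\V x)\bigr].
\]
Here we used folding, $f(c\V x)=cf(\V x)$, and the homomorphism property of $\rho$. Packaging all entries as a matrix, let $M:=\E_{\V x}[\rho(f(\V x))\alpha(\V x)]$, so that $\hat g(\alpha)=M_{ij}$. The identity then reads $M=\sigma(c)\rho(c)M$ for every $c\in G$.

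The second step is to average over $c\in G$, which yields
\[
M=\Bigl(\E_{c}\bigl[\sigma(c)\rho(c)\bigr]\Bigr)M.
\]
The map $c\mapsto\sigma(c)\rho(c)$ is the tensor product $\sigma\otimes\rho$. Because $\sigma$ is one-dimensional, this product is again an irreducible representation: any invariant subspace of $\sigma\otimes\rho$ is also invariant under $\rho$, since the two differ by scalars. Its dimension is $\dim(\rho)\ge 2$, so it is non-trivial. Proposition~\ref{prop:summapzero} then gives $\E_c[\sigma(c)\rho(c)]=0$. Hence $M=0$, and in particular $\hat g(\alpha)=M_{ij}=0$.

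The only potentially delicate point is justifying that $\sigma\otimes\rho$ is a non-trivial irreducible representation so that Proposition~\ref{prop:summapzero} applies. The dimension count settles this, as shown above, so I expect no real obstacle. It is also worth recording an alternative ending in case one prefers not to invoke irreducibility: apply the orthogonality relations of Proposition~\ref{prop:orthmatrixentries} between the non-isomorphic irreducibles $\rho$ and $\overline{\sigma}$, whose dimensions $\dim(\rho)\ge 2$ and $1$ differ. Each entry $\E_c[\sigma(c)\rho(c)_{kl}]$ is such an inner product, so it vanishes.
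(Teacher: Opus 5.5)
Your proof is correct and takes essentially the same approach as the paper. The paper only cites this lemma from \cite{BhangaleK21}, but its proof of the analogous Lemma~\ref{lemma:folding} uses the same mechanism: write $\V x = c\gp(1_G,\V y)$, pull $c$ out through folding and the one-dimensional representation, and kill the $c$-average via Proposition~\ref{prop:summapzero}. Your change of variables followed by averaging, which yields $M=\E_c[\sigma(c)\rho(c)]\,M$ with $\sigma\otimes\rho$ a non-trivial irreducible of dimension $\dim(\rho)\geq 2$, is just a slightly cleaner packaging of that factorization.
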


	Fix any surjective projection map $\pi: [L]\rightarrow [R]$ for some $L\geq R$. Consider the following subgroup of $G^L$ given by the elements 
	$$ \{ (x\circ \pi) \in G^L \mid x\in  G^R\},$$
	where $(x\circ \pi)_i = x_{\pi(i)}$. Let us denote this group by $\pi(G^L)$. Note that this group is isomorphic to $G^R$. Thus, any representation $\alpha \in \irr(G^L)$ (which is a representation of $G^R$ using \Cref{fact:subgroup_rep}), can be decomposed into irreducible representations of $G^R$.

	\subsection{Notations}
	
	Whenever possible, we use the notation $\alpha, \beta$ to denote the representations of a group $G^n$ and $\rho, \tau$ for group $G$. Also, we use bold letters $\V x, \V c$ to denote the elements of $G^n$.
	
	For a representation $\alpha\in \irr(G^n)$ where $\alpha = \otimes_{i=1}^n \rho_i$, we use the notation $\dimgeqi{\alpha}{k}$ to denote the number of $i\in [n]$ such that $\dim(\rho_i)\geq k$.

\section{An Approximation Algorithm}
    \label{sec:approx_alg}
	In this section, we give an approximation algorithm for $\text{Max-E$3$-LIN}_{S}(G)$ for any group $(G, \gp)$ and $S\subseteq G$. This algorithm is a straightforward generalization of the algorithm for abelian groups $G$ discussed in the introduction. We repeat it here for completeness. 
	
	\begin{theorem}
		\label{thm:approx_alg}
		There exists a $\frac{|S|}{|H_S|}$-approximation algorithm for $\text{Max-E$3$-LIN}_{S}(G)$, where $H_S$ is the smallest normal subgroup such that (i) $[G, G] \subseteq H_S$, and (ii) $S$ is a subset of some coset of $H_S$. 
	\end{theorem}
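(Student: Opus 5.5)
The plan is to carry out, for a general finite group, the reduction to an abelian linear system described in the introduction. Let $Q := G/H_S$. Since $[G,G]\subseteq H_S$ and $H_S$ is normal, $Q$ is an abelian group; write $[z]_Q$ for the image of $z\in G$. Fix $g\in G$ with $S\subseteq gH_S$.

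A constraint $C_i$ of the instance $\Phi$ has the form $(a_{i_1}\gp x_{i_1})\gp(a_{i_2}\gp x_{i_2})\gp(a_{i_3}\gp x_{i_3})\in S$. Applying the quotient homomorphism $G\to Q$, this constraint implies the abelian linear equation
$$[a_{i_1}]_Q+y_{i_1}+[a_{i_2}]_Q+y_{i_2}+[a_{i_3}]_Q+y_{i_3}=[g]_Q,$$
where $y_j$ stands for $[x_j]_Q$. Here we use commutativity of $Q$ to write the product additively. Let $\tilde\Phi$ be the system of these equations over $Q$. If $\V\alpha$ satisfies $\Phi$, then $y_j:=[\V\alpha(x_j)]_Q$ satisfies $\tilde\Phi$, so $\tilde\Phi$ is satisfiable. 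A solution $\tilde{\V\beta}$ can therefore be found in polynomial time by Gaussian elimination over a finite abelian group~\cite{RG99}; concretely, decompose $Q$ into cyclic factors and solve over each factor, i.e.\ solve linear systems over $\mathbb{Z}_m$ via Smith/Hermite normal form.

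The rounding step is as follows. Independently for each $j$, pick $x_j$ uniformly from the coset $\tilde{\V\beta}(y_j)\subseteq G$, which is a coset of $H_S$. The key claim is that each constraint is satisfied with probability exactly $|S|/|H_S|$. Fix a constraint and condition on $x_{i_1},x_{i_2}$. Then $x_{i_3}=c\gp h$ with $c$ a fixed coset representative and $h$ uniform in $H_S$. The product $P=u\gp a_{i_3}\gp c\gp h$, for fixed $u$, is uniform over the coset $u a_{i_3} c H_S$, since right multiplication by a uniform element of $H_S$ is a bijection onto the coset. The equation in $\tilde\Phi$ holds for $\tilde{\V\beta}$, so this coset equals $gH_S$, which contains $S$. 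Hence $\Pr[P\in S]=|S|/|H_S|$.

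This argument needs the three variables in a constraint to be distinct, or at least the last occurrence to be handled correctly. If a variable repeats, I would instead condition on all the other variables and use the last occurrence of the repeated variable. The element $h$ then sits in an inner position, so I would rewrite $w\gp h\gp v=(wv)\gp(v^{-1}hv)$ and use normality of $H_S$ to conclude that $v^{-1}hv$ is still uniform on $H_S$. Linearity of expectation then gives an expected $|S|/|H_S|$ fraction of satisfied constraints. This is at least $\frac{|S|}{|H_S|}\cdot\mathrm{OPT}$ because $\mathrm{OPT}\le 1$; on satisfiable instances it is exactly that ratio.

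Derandomization is by conditional expectations. The conditional success probability of each constraint, given a partial assignment, is computable by enumerating at most $|H_S|^3$ choices, which is a constant. The main obstacle is the careful handling of non-abelian literal positions and repeated variables, and normality of $H_S$ is exactly what resolves it.
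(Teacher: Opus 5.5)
Your main argument is the paper's proof: pass to the abelian quotient $Q=G/H_S$, solve the induced system over $Q$ by elimination, and pick each $x_j$ uniformly from its coset. Your right-multiplication argument correctly fills in the step the paper calls ``easily observed,'' provided the three variables of each constraint are distinct. That is the usual E3 convention, and the paper tacitly assumes it.

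Your paragraph on repeated variables, however, is wrong as written. Suppose $x_{i_1}=x_{i_2}=x$. Conditioning on the other variable does not freeze the first occurrence of $x$. The product is then $a_{i_1}\gp c\gp h\gp a_{i_2}\gp c\gp h\gp a_{i_3}\gp x_{i_3}$, with the same $h$ appearing twice. Conjugating, this depends on $h$ only through $\phi(h)\gp h$, where $\phi$ is an automorphism of $H_S$. The map $h\mapsto\phi(h)\gp h$ need not be a bijection; already in the abelian case it is $h\mapsto h^2$.

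When exactly two literals share a variable, the correct fix is to randomize over the variable that occurs once. Your conjugation trick $w\gp h\gp v=(wv)\gp(v^{-1}hv)$ then handles it if that variable sits in the middle position.

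When all three literals use the same variable, the exact-probability claim is simply false. Take $G=\mathbb{Z}_9$ and $S=\{0,3\}$, so $H_S=\{0,3,6\}$ and the target ratio is $2/3$. The constraint $x+x+x\in S$ is satisfiable (by $x=0$), and its image $3y=[0]_Q$ holds for every $y\in Q\cong\mathbb{Z}_3$. If elimination returns $y=[2]$, then every $x\in\{2,5,8\}$ gives $3x=6\notin S$, so the algorithm satisfies nothing. You should therefore assume distinct variables per constraint, as the paper does, rather than claim the general case.

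Separately, your remark that the value is ``at least $\frac{|S|}{|H_S|}\cdot\mathrm{OPT}$ because $\mathrm{OPT}\le 1$'' only makes sense on satisfiable instances. Otherwise $\tilde\Phi$ may have no solution, which is why the guarantee is stated for satisfiable instances.
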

	\begin{proof}
		Let $\Phi$ be an instance of $\text{Max-E$3$-LIN}_{S}(G)$ with constraints $(C_1, C_2, \ldots, C_m)$ over the variables $X = \{ x_1, x_2, \ldots, x_n\}$. We first convert the set of constraints to a system of equations, denoted by $\tilde{\Phi}$, over the group $(Q,+):= G/H_S$ with variables $Y = \{ y_1, y_2, \ldots, y_n\}$. Note that $H_S$ is a normal subgroup of $G$ containing the commutator subgroup such that $S\subseteq gH_S$ for some $g\in G$. Thus, $Q$ is an abelian group. 
		
		Consider a constraint $C_i$ which is of the form $(a_{i_1}\gp x_{i_1})\gp (a_{i_2}\gp x_{i_2})\gp (a_{i_3}\gp x_{i_3})\in S$. We convert this to the equation over $Q$ as 
		$$[a_{i_1}]_Q + y_{i_1} + [a_{i_2}]_Q + y_{i_2} + [a_{i_3}]_Q + y_{i_3} = [S]_Q,$$
		where $[S]_Q$ is an element of $Q$ that corresponds to the coset of $H_S$ containing $S$, and the element $[g]_Q$ corresponds to the coset of $H_S$ containing $g$.
		
		As $\Phi$ is satisfiable, consider the satisfying assignment $\V \alpha: X\rightarrow G$ to $\Phi$. Consider the assignment $\tilde{\V \alpha} : Y\rightarrow Q$ given by the natural map $\tilde{\V \alpha}(y_i) = [\V \alpha(x_i)]_Q$ . It is easy to see that $\tilde{\V \alpha}$ satisfies all the equations from the instance $\tilde{\Phi}$, and hence, $\tilde{\Phi}$ is satisfiable.
		
		Since $\tilde{\Phi}$ is a system of equations over an abelian group $(Q,+)$, we can find a satisfying assignment to $\tilde{\Phi}$ in polynomial time using Gaussian elimination~\cite{RG99}. Let $\tilde{\V \beta}$ be the assignment returned by this procedure. To construct the final assignment to the $X$ variables, we simply set $x_i$ to be a random element from the coset $\tilde{\V \beta}(y_i)$. Let $\V \beta: X\rightarrow G$ be the random assignment given by the above procedure. It can be easily observed that $\V \beta$ satisfies a given constraint $C_i$ in $\Phi$ with probability $\frac{|S|}{|H_S|}$ and hence $\V \beta$ satisfies $\frac{|S|}{|H_S|}$ fraction of the constraints in expectation. The randomized algorithm can be easily derandomized using the method of conditional expectations.
	\end{proof}

\section{Hardness of $\text{Max-E$3$-LIN}_{S}(G)$}
    \label{sec:nonabelian_red}
	We start with some basic facts. For a nonabelian group $G$, the quotient group $G/[G,G]$ is an abelian group. The dual of $G/[G,G]$ is isomorphic to
	$$\{\chi_{\rho}~|~\rho\in\irr(G), \dim(\rho)=1\}$$
	We denote this subgroup of $\hat{G}$ as $\widehat{G/[G,G]}$.
	
	Similarly, consider any normal subgroup $H\trianglelefteq G$ such that $[G,G]\subseteq H$, the quotient group $G/H$ is an abelian group that is isomorphic to a subgroup of $G/[G,G]$. Furthermore, the dual of $G/H$ is isomorphic to 
	$$\{\chi_{\rho}~|~\rho\in\irr(G),\dim(\rho)=1, \chi_\rho(h)=1, \forall h\in H\}.$$
	
    \paragraph{Main Reduction.} We now give a reduction from a \LLC\ instance, denoted by,  $\mathcal{H}=(\mathcal{V}=\{V_1,\dots,V_T\}, \{\Pi_{i,j}\}_{1\leq t < t'\leq T}\}, \{[R_t]_{t\in [T]}\}$ to a $\text{Max-E$3$-LIN}_{S}(G)$ instance $\Phi$ over a non-abelian group $G$. For $\delta>0$, we will use the following setting of $T$ and $r$ in Theorem~\ref{thm:layered_lc_hardness}
 $$ 2^{-r}\leq \min\left\{ \frac{\delta^{10}}{(2|G|)^{20}}, \frac{\delta^2}{10|G|^{10K}}\right\}, \quad T \geq \left(\frac{8|G|^3}{\delta}\right)^4,$$
where $K := \frac{8|G|^6}{\delta^2}$.

	Consider a \LLC\ instance $\mathcal{H}=(\mathcal{V}=\{V_1,\dots,V_T\}, \{\Pi_{t,t'}\}_{1\leq t < t' \leq T}\}, \{[R_t]_{t\in [T]}\}$. For all $t\in [T]$ and for each $v\in V_t$, we create a cluster $C[v]$ of literals of size $|G|^{R_t}$. In each cluster $C[v]$, each literal is indexed by a string of length $R_t$. For any string $(1_G, \V y)\in G^{R_t}$, its corresponding literals are $g\gp (1_G,\V y)$ for $g\in G$, where the string $g\gp (1_G,\V y)$ is $(g, g\gp y_1,\ldots, g\gp y_{R_{t-1}})$. 
	
	An assignment to the instance that we are going to create is given by the maps $A_v: G^{R_t} \rightarrow G$ for all $v\in V_{t}$ and all $t\in[T]$. Note that any such assignment is assumed to be folded, i.e., $A_v(g\gp (1_G, \V y)) = g\gp A_v((1_G,\V y))$.
	
	The distribution on the constraint of the reduced instance $\Phi$ of $\text{Max-E$3$-LIN}_{S}(G)$ is given by the following PCP verifier.
	\begin{enumerate}
        \item Pick a uniformly random pair $(t,t')$ satisfying $1\leq t<t'\leq T$.
		\item Choose an edge constraint $\pi_{uv}:[R_t] \rightarrow [R_{t'}]$ from $\mathcal{H}$ uniformly at random.
		\item Sample a string $\V x\sim G^{R_{t'}}$ and $\V y\sim G^{R_t}$ independently and uniformly at random.
		\item Sample an element $\V s \in S^{R_t}$ uniformly at random.
		\item For each $j\in [R_t]$, set $z_j=y_j^{-1}\gp x_{\pi_{uv}(j)}^{-1}\gp s_j$.
		\item Accept if and only if $A_v(\V x)\gp A_u(\V y)\gp A_u(\V z)\in S$.
	\end{enumerate}
	
	\subsection{Completeness}

	If $\mathcal{H}$ is fully satisfiable, then there exists a corresponding assignment $\sigma$ such that all the constraints are satisfiable.  Let $A_v(\V x)= x_{\sigma(v)}$ and $A_u(\V y)= y_{\sigma(u)}$, i.e., the dictator functions. Then, the test passes as,
	\begin{align*}
		&A_v(\V x) \gp A_u(\V y) \gp A_u(\V z)\\
		&=x_{\sigma(v)}\gp y_{\sigma(u)} \gp z_{\sigma(u)}\\
		&=x_{\sigma(v)}\gp y_{\sigma(u)} \gp (y_{\sigma(u)})^{-1} \gp (x_{\pi_{u,v}(\sigma(u))})^{-1}\gp s_{\sigma(u)} \\
		&=x_{\sigma(v)}\gp y_{\sigma(u)} \gp (y_{\sigma(u)})^{-1} \gp (x_{\sigma(v)})^{-1}\gp s_{\sigma(u)} \tag{Using $\pi_{u,v}(\sigma(u)) = \sigma(v)$}\\
		&=s_{\sigma(u)}\in S.
	\end{align*}
	Hence, the test always passes. Thus, the value of the instance $\Phi$ is $1$.
	
	\subsection{Soundness}
	In this section, we prove the soundness of the analysis. 
	
	\begin{lemma}
    \label{lemma:main_soundness}
         % If the value of the $\text{Max-E$3$-LIN}_{S}(G)$ instance $\Phi$ is greater than $\frac{|S|}{|H_S|}+ \delta$, then for constants $C=O_G(\delta^{-4/d_0})$ and $K=\Omega_{G}(\delta^{-4/d_0})$ independent of $R_t$, there exists a labeling to $\mathcal{H}$ with value of at least $\min(\frac{\delta^2}{16C|G|^{C/2}}, \frac{\delta^2}{40|S|^2|G|10^K})$.\abnote{There exists a labeling to two layers that satisfies at least xx fraction of the constraints between the layers. For every $\delta>0$, if $\mathcal{H}$ is at most $\min(\frac{\delta^2}{16C|G|^{C/2}}, \frac{\delta^2}{40|S|^2|G|10^K})$ satisfiable, where $C=O_G(\delta^{-4/d_0})$ and $K=\Omega_{G}(\delta^{-4/d_0})$, then the $\text{Max-E$3$-LIN}_{S}(G)$ instance $\Phi$ has value at most $\frac{|S|}{|H_S|}+ \delta$, }
         For every $\delta>0$, if the \LLC\ instance $\mathcal{H}$ is at most $2^{-r}$, then the $\text{Max-E$3$-LIN}_{S}(G)$ instance $\Phi$ is at most $\frac{|S|}{|H_S|}+ \delta$ satisfiable.
	\end{lemma}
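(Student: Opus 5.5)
We prove the contrapositive: assuming $\Phi$ has value more than $\frac{|S|}{|H_S|}+\delta$, we decode a labeling of some pair of layers satisfying more than a $2^{-r}$ fraction of $\Pi_{t,t'}$. The first step is to write the acceptance probability through the character expansion of the indicator of $S$ (Proposition~\ref{prop:sumdimsquare}):
$$\Pr[\text{accept}]=\frac{1}{|G|}\sum_{s\in S}\sum_{\rho\in\irr(G)}\dim(\rho)\,\E\bigl[\chi_\rho\bigl(A_v(\V x)\gp A_u(\V y)\gp A_u(\V z)\gp s^{-1}\bigr)\bigr].$$
We split the representations into three classes: (a) $\dim(\rho)=1$ and $\chi_\rho$ trivial on $H_S$; (b) $\dim(\rho)=1$ and $\chi_\rho$ nontrivial on $H_S$; (c) $\dim(\rho)\ge 2$.

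For class (a), we bound each expectation by $1$ in absolute value. By the duality fact at the start of this section, there are exactly $|G/H_S|$ such characters, so this class contributes at most $\frac{|S|}{|G|}\cdot\frac{|G|}{|H_S|}=\frac{|S|}{|H_S|}$. It therefore suffices to show that classes (b) and (c) together contribute more than $\delta$ only if $\mathcal H$ has a good labeling. We would allot $\delta/2$ to each class.

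For class (c), we invoke Lemma~\ref{lemma:highterm_bound}, the black box from~\cite{BhangaleK21new}. Folding, together with Lemma~\ref{lemma:folded_zero_coeff}, kills all one-dimensional Fourier coefficients of $\rho(A_u)_{ij}$. If the total contribution of this class exceeds $\delta/2$, that lemma yields a pair of layers $t<t'$ and a labeling satisfying more than $\frac{\delta^{10}}{(2|G|)^{20}}\ge 2^{-r}$ of $\Pi_{t,t'}$. This is where the choice $T\ge(8|G|^3/\delta)^4$ is used.

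For class (b), fix $\rho$ with $\chi=\chi_\rho$ a homomorphism $G\to\C^*$. Then the term factorizes as $\E[\chi(A_v(\V x))\chi(A_u(\V y))\chi(A_u(\V z))]\overline{\chi(s)}$. Set $F=\chi\circ A_v$ and $F'=\chi\circ A_u$, expand both in one-dimensional Fourier characters (folding forces the other coefficients to vanish), and average over $\V x,\V y,\V s$. Since $z_j=y_j^{-1}x_{\pi(j)}^{-1}s_j$, averaging over $\V y$ forces the same $\beta$ on the $\V y$ and $\V z$ factors, and averaging over $\V x$ forces the $\V x$-character to be the pushforward $\pi(\beta)$. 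What remains is
$$\sum_{\beta}\hat F(\pi(\beta))\,\hat{F'}(\beta)\,\overline{\hat{F'}(\beta)}\ \text{-type terms}\times\prod_{j}\E_{s\sim S}[\beta_j(s)].$$
Here $|\E_{s\sim S}\beta_j(s)|=1$ if and only if $\beta_j$ is constant on $S$, equivalently on $S^{-1}S$, equivalently on $H_S$; this last equivalence is exactly where we use that $S^{-1}S$ generates $H_S$. Otherwise $|\E_{s\sim S}\beta_j(s)|\le 1-\eta$ with $\eta=\eta(|G|)>0$.

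We then truncate at $|\beta|_S\le K$, with $K=8|G|^6/\delta^2$. The tail is bounded via Cauchy--Schwarz and Parseval by $(1-\eta)^K\le\delta/(4|G|^2)$. For the low part we use the modified decoding strategy. For $u$, pick $\beta$ with probability $|\hat{F'}(\beta)|^2$ and a uniformly random coordinate $j$ with $\beta_j$ nonconstant on $H_S$. For $v$, pick $\alpha$ with probability $|\hat F(\alpha)|^2$ and a coordinate with $\alpha_i$ nonconstant on $H_S$.

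The step I expect to be the main obstacle is the following. Before decoding, we must show that a non-negligible low part forces $\pi(\beta)$ itself to have a coordinate nonconstant on $H_S$, and that this coordinate is the image of one of $\beta$'s $H_S$-nonconstant coordinates. The danger is that characters from different preimages of a label multiply to something constant on $H_S$, or cancel to trivial. I would handle this by noting two facts. Coordinates where $\beta_j$ is constant on $H_S$ factor through the abelian quotient $G/H_S$. Moreover, folding $\hat F(\gamma)\neq0$ forces $\prod_i\gamma_i=\chi$, which is nonconstant on $H_S$, so $\gamma$ has at least one $H_S$-nonconstant coordinate. With this, the usual Cauchy--Schwarz argument gives success probability at least $\delta^2/(10|G|^{10K})\ge 2^{-r}$ on some pair of layers, which contradicts the soundness assumption.
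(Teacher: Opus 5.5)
There is a genuine gap in your treatment of class (b). The step ``expand both in one-dimensional Fourier characters (folding forces the other coefficients to vanish)'' is false for non-abelian $G$. Folding only gives $\hat F(\beta)=\E_{c\in G}[\chi(c)\,\beta(c,\dots,c)]\,\hat F(\beta)$. This kills $\beta=\otimes_j\beta_j$ only when the restriction of $\beta$ to the diagonal copy of $G$ does not contain $\overline{\chi}$. You are effectively using the converse of Lemma~\ref{lemma:folded_zero_coeff}, and that converse fails.

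For instance, take $G=S_3$ and $\chi=\mathrm{sgn}$; this character lies in class (b) whenever $S$ contains permutations of both parities. Let $\rho$ be the $2$-dimensional irrep, and take the folded function $A(x_1,x_2)=x_1\gp a(x_1^{-1}x_2)$. Then $\widehat{\chi\circ A}(\rho\otimes\rho)=\E_x[\mathrm{sgn}(x)\,\rho(x)\otimes\rho(x)]\cdot(I\otimes\widehat{\mathrm{sgn}\circ a}(\rho))$. The first factor is a rank-one projection, since sgn occurs once in $\rho\otimes\rho$. The second factor is invertible if, say, $a(t)=t$ for one transposition $t$ and $a\equiv 1_G$ elsewhere. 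So the coefficient is nonzero.

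Lemma~\ref{lemma:folding} only removes those $\beta$ all of whose coordinates are one-dimensional and trivial on $H_S$. This is why the paper's Lemma~\ref{lemma:soundness_dimone} keeps all $\beta\in\irr(G^L)$:
\begin{itemize}
\item $s(\beta)$ contains the higher-dimensional coordinates;
\item $s(\alpha)\subseteq\pi(s(\beta))$ is derived via Proposition~\ref{prop:orthmatrixentries};
\item decoding loses a factor $\dim(\beta)\le|G|^{C/2}$;
\item the tail $|s(\beta)|\ge C$ is split into $\dim(\beta)\le D$ and $\dim(\beta)>D$.
\end{itemize}

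The missing ingredient is that the noise from $\V s$ must also damp higher-dimensional coordinates. You need $\opnorm{\E_{s\in S}[\beta_j(s)]}\le 1-\delta_G$ whenever $\dim(\beta_j)\ge 2$, which is Claim~\ref{claim:highterms}. This is exactly where the hypothesis $\langle S^{-1}S\rangle=H_S$ is used. A unit vector attaining norm $1$ is fixed by $S^{-1}S$, hence by $[G,G]\subseteq H_S$, and irreducibility then forces $\beta_j$ to be one-dimensional. Without the hypothesis this fails: for $S=\{g\}$ the average is a unitary.

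Conversely, the place where you invoke the hypothesis does not need it. Since $H_S=[G,G]\,\langle S^{-1}S\rangle$ and every one-dimensional character is trivial on $[G,G]$, ``constant on $S$'' is equivalent to ``trivial on $H_S$'' for every $S$; this is Claim~\ref{claim:main_S_noise_claim}. So your argument never addresses the part of class (b) that actually requires the theorem's assumption. Apart from this, your handling of classes (a) and (c), and the consistency and decoding argument for the one-dimensional part of (b), match the paper.
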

	\begin{proof}
		Fix the assignment $\{A_v\}_{v\in \mathcal{V}}$ to the instance $\Phi$. We define the value of an assignment $A$, $\textrm{value}(A)$, as the probability that the above test passes. The following expression gives the value of this assignment $A$,  
		
		\begin{align*}
			\mathrm{value}(A)  = \E_{1\leq t<t'\leq T}\left[\E_{\substack{\pi_{uv}\in \Pi_{t,t'}\\ (\V x, \V y, \V s)}} \left[ \sum_{s\in S} \left[\frac{1}{|G|} \sum_{\rho\in \irr(G)}\dim(\rho)\cdot \chi_{\rho}(A_v(\V x)\gp A_{u}(\V y)\gp  A_{u}(\V z)\gp s^{-1}) \right] \right]\right].
		\end{align*}
		By Proposition~\ref{prop:sumdimsquare}, this expression equals $1$ if and only if $A_v(\V x)\gp A_{u}(\V y)\gp  A_{u}(\V z)\in S$, and $0$ otherwise. We can rewrite this expression according to the representations $\rho\in\irr(G)$,
	\begin{align}
		\mathrm{value}(A) &= \frac{1}{|G|}\E_{1\leq t<t'\leq T}\left[\sum_{\rho\in\irr(G)}\sum_{s\in S}\E_{\substack{\pi_{uv}\in \Pi_{t,t'}\\ \V x ,\V y, \V s}}\left[ \dim(\rho)\cdot 
		\chi_{\rho}(A_v(\V x)\gp A_{u}(\V y)\gp  A_{u}(\V z)\gp s^{-1}) \right]\right]\notag\\
		&=\frac{1}{|G|}\E_{1\leq t<t'\leq T}\left[\sum_{\substack{\rho\in\widehat{G/H_S}\\\dim(\rho)=1}}\sum_{s\in S}\E_{\substack{\pi_{uv}\in \Pi_{t,t'}\\\V x ,\V y, \V s}}\left[\dim(\rho)\chi_{\rho}(A_v(\V x)\gp A_{u}(\V y)\gp  A_{u}(\V z)\gp s^{-1}) \right]\right]\label{eq:constant}\\
		&+ \frac{1}{|G|}\E_{1\leq t<t'\leq T}\left[\sum_{\substack{\rho\notin\widehat{G/H_S}\\\dim(\rho)=1}}\sum_{s\in S}\E_{\substack{\pi_{uv}\in \Pi_{t,t'}\\\V x ,\V y, \V s}}\left[\dim(\rho)\chi_{\rho}(A_v(\V x)\gp A_{u}(\V y)\gp  A_{u}(\V z)\gp s^{-1}) \right]\right]\label{eq:nonconstant}\\
		&+\frac{1}{|G|}\E_{1\leq t<t'\leq T}\left[\sum_{\dim(\rho)\geq 2}\sum_{s\in S}\E_{\substack{\pi_{uv}\in \Pi_{t,t'}\\\V x ,\V y, \V s}}\left[ \dim(\rho)\cdot \chi_{\rho}(A_v(\V x)\gp A_{u}(\V y)\gp  A_{u}(\V z)\gp s^{-1}) \right]\right].\label{eq:greaterdim}
	\end{align}
	Term~(\ref{eq:constant}) is a constant between any two layers. As for term~(\ref{eq:nonconstant}), we prove that they can be used to decode a valid assignment to any pair of layers unless they are negligible along a random path $p$. Finally, for term~(\ref{eq:greaterdim}), we use~\cite{BhangaleK21new} as a black box and show they are negligible along a random path $p$.
    
    For term~\ref{eq:constant} and term~\ref{eq:nonconstant}, since they have $\dim(\rho)=1$, the character $\chi_\rho$ is a homomorphism. Thus, we have
	\begin{align*}
		\dim(\rho)\chi_{\rho}(A_v(\V x)\gp A_{u}(\V y)\gp  A_{u}(\V z)\gp s^{-1})=\chi_{\rho}(A_v(\V x)\gp A_{u}(\V y)\gp  A_{u}(\V z))\cdot \chi_{\rho}(s^{-1}).
	\end{align*}
	For term~(\ref{eq:constant}), as $\chi_\rho$ is a 1-bounded function, it's upper bounded by
	\begin{align*}
		\frac{1}{|G|}\sum_{\substack{\rho\in\widehat{G/H_S}\\\dim(\rho)=1}}\sum_{s\in S}1=\frac{1}{|G|}|S|\left|\frac{G}{H_S}\right|=\frac{|S|}{|H_S|}.
	\end{align*}

    	\subsubsection{Bounding expressions in (\ref{eq:nonconstant})}
	We now bound the term~(\ref{eq:nonconstant}). Using $|\chi_\rho(s^{-1})|\leq1$, we have (\ref{eq:nonconstant}) is at most
	\begin{align*}
		\frac{|S|}{|G|}\sum_{\substack{\rho\notin\widehat{G/H_S}\\\dim(\rho)=1}}\E_{\substack{(u,v)\\\V x ,\V y, \V s}}\left[\chi_{\rho}(A_v(\V x))\cdot \chi_{\rho}(A_{u}(\V y))\cdot  \chi_\rho(A_{u}(\V z)) \right],
	\end{align*}
	We argue that if this expression is large, then a decoding strategy exists for the Label Cover instance.
	We need the following simple lemma.
	\begin{lemma}\label{lemma:folding}
		Let $h: G^n \rightarrow G$ be any folded function, $\beta$ is a representation of $G^n$. Define $s(\beta):=\{i~|~\beta_i\notin \widehat{G/H_S}\}$. Let $g(\V x) = \chi_{\rho}(h(\V x))$ for $\chi_\rho\notin \widehat{G/H_S}$ and $\dim(\rho)=1$, then $\hat{g}(\beta) = 0$ for all $\beta$ with $s(\beta)=\emptyset$.
	\end{lemma}
	\begin{proof}
		By definition, we have 
		\begin{align*}
			\hat{g}(\beta)&=\E_{\V x \in G^n} [\chi_{\rho}(h(\V x))\beta(\V x)]\\
			&=\E_{\substack{\V y\in G^{n-1}\\x_1=1_G}}\left[\E_{c\in G}[\chi_\rho(h(c\gp (1_G, \V y)))\beta(\V c\gp(1_G,\V y))]\right]\\
			&=\E_{\V y\in G^{n-1}}\left[\E_{c\in G}[\chi_\rho(c)\cdot \chi_\rho(h((1_G, \V y)))\cdot \beta(\V c) \cdot \beta((1_G,\V y))]\right]\\
			&=\E_{\V y\in G^{n-1}}\chi_{\rho}(h((1_G, \V y)))\beta((1_G, \V y))\cdot \E_{c\in G}\left[\chi_\rho(c)\beta(\V c)\right].
		\end{align*}
		If $\beta$ satisfies $|s(\beta)|=0$, then $\beta(\V c)=\bigotimes_{i=1}^{L}\beta_i(c)$ is a complex number. Therefore, there always exists a $\rho'$ such that $\dim(\rho')=1$ and $\chi_\rho(c)\beta(\V c)=\chi_{\rho'}(c)$.  Furthermore, such $\rho'\notin\widehat{G/H_S}$ since for any element $q\in G/H_S$, $\beta(q)=1$ and $\chi_\rho(q)\neq 1$, indicating that $\chi_{\rho'}(q)\neq 1$. Hence, by Proposition~\ref{prop:summapzero},
		\begin{align*}
			\hat{g}(\beta)=\E_{\V y\in G^{n-1}}\chi_{\rho}(h((1_G, \V y)))\beta((1_G, \V y))\cdot \E_{c\in G}[\chi_{\rho'}(c)]=0.
		\end{align*}
	\end{proof}
	
    We now prove the following main lemma from this section.
	\begin{lemma}
		\label{lemma:soundness_dimone}
		If the \LLC\ instance $\mathcal{H}$ is at most $2^{-r}$ satisfiable, then for any $\dim(\rho)=1$ such that $\rho\notin\widehat{G/H_S}$, 
		\begin{align}
			\label{eq:soundness_decode}
			\left|\E_{1\leq t < t'\leq T}\left[ \E_{\pi_{uv}\in \Pi_{t,t'}}\left[\E_{\V x, \V y} \left[ \chi_{\rho}(A_v(\V x))\cdot \chi_{\rho}(A_{u}(\V y))\cdot  \chi_\rho(A_{u}(\V z)) \right] \right]\right] \right|\leq \frac{\delta}{2|G|},
		\end{align}
	\end{lemma}

	\begin{proof}
		Consider two layers $U$ and $V$ whose alphabets are $[L]$ and $[R]$ and an edge constraint $e=(u,v)$ such that $u\in U$ and $v\in V$. Let $\pi$ denote the projection constraint on $e$. Let $f_v(\V x)=\chi_{\rho}(A_v(\V x))$, $g_u(\V x)=\chi_{\rho}(A_u(\V x))$ and $h_u^{\V s}(\V x)=\chi_\rho(A_u(\V x\gp \V s))$. With these notations, we have
        $$ \chi_{\rho}(A_{u}(\V y))\cdot  \chi_\rho(A_{u}(\V z)) = g_u(\V y) \cdot h_u^{\V s}(\V y^{-1} \gp (\V x\circ \pi)^{-1}) = (g_u\conv h_u^{\V s}) ((\V x\circ \pi)^{-1}),$$
        where $(\V x\circ \pi)_{j} = x_{\pi(j)}$. Thus, the inner expectation can be written as
		\begin{align*}
			&\E_{\V x, \V s}[f_v(\V x)\cdot (g_u \conv h^{\V s}_u)((\V x \circ \pi)^{-1})]\\
            &=\E_{\V x, \V s}\left[\sum_{\alpha\in\irr(G^R)}\dim(\alpha)\trace(\hat{f_v}(\alpha)\cdot \alpha(\V x^{-1}))\sum_{\beta\in \irr(G^L)}\dim(\beta)\trace(\hat{g_u}(\beta)\hat{h_u^{\V s}}(\beta)\cdot\beta(\V x \circ \pi))\right]\\
			&=\E_{\V x, \V s}\left[\sum_{\alpha,\beta}\dim(\alpha)\dim(\beta)\trace(\hat{f_v}(\alpha)\cdot \alpha(\V x^{-1}))\trace(\hat{g_u}(\beta)\hat{h_u^{\V s}}(\beta)\cdot\beta(\V x \circ \pi))\right]\\
			&=\sum_{\alpha,\beta}\dim(\alpha)\dim(\beta)\E_{\V x, \V s}\left[\trace(\hat{f_v}(\alpha)\cdot \alpha(\V x^{-1}))\trace(\hat{g_u}(\beta)\hat{h_u^{\V s}}(\beta)\cdot\beta(\V x \circ \pi))\right].
		\end{align*}
		Denote 
		$$\termm{e}{\alpha}{\beta}:=\dim(\alpha)\dim(\beta)\E_{\V x, \V s}[\trace(\hat{f_v}(\alpha)\cdot \alpha(\V x^{-1}))\trace(\hat{g_u}(\beta)\hat{h_u^{\V s}}(\beta)\cdot\beta(\V x \circ \pi))],$$
		we have
		\begin{align*}
			\termm{e}{\alpha}{\beta}&=\dim(\alpha)\dim(\beta)\E_{\V x, \V s}\left[\sum_{1\leq p,q\leq \dim(\alpha)}\hat{f_v}(\alpha)_{pq}\cdot \alpha(\V x^{-1})_{qp}\sum_{1\leq i,k\leq \dim(\beta)}\hat{g_u}(\beta)\hat{h_u^{\V s}}(\beta)_{ik}\cdot\beta(\V x \circ \pi)_{ki}\right]\\
			&=\dim(\alpha)\dim(\beta)  \E_{\V s}\left[\sum_{p,q,i,k}\hat{f}_v(\alpha)_{pq}\cdot (\hat{g}_u(\beta)\hat{h^{\V s}_u}(\beta))_{ik}\cdot  \E_{\V x}\left[\alpha(\V x^{-1})_{qp}\cdot\beta(\V x\circ \pi)_{ki}\right]\right],
		\end{align*}
		where $(i,k)$ are the tuples $i=(i_1,i_2,\dots,i_L)$ and $k=(k_1,k_2,\dots,k_L)$. Similarly, $(p,q)$ are tuples $p=(p_1,p_2,\dots,p_R)$ and $q=(q_1,q_2,\dots,q_R)$. Then,
		\begin{align}
			\E_{\V x}\left[\alpha(\V x^{-1})_{qp}\cdot\beta(\V x\circ \pi)_{ki}\right]&=\E_{\V x}\left[\prod_{l=1}^R\alpha_l(x_l^{-1})_{q_l p_l}\prod_{l'=1}^{L}\beta_{l'}(x_{l'})_{k_{l'}i_{l'}}\right]\\
			&=\prod_{l=1}^R\E_{\V x}\left[\alpha_l(x_l^{-1})_{q_l p_l}\prod_{l'\in\pi^{-1}(l)}\beta_{l'}(x_{l'})_{k_{l'}i_{l'}}\right].\label{eq:alphabetaconsistent}
		\end{align}
            We argue that $s(\alpha)\subseteq \pi(s(\beta))$ to make this expression non-zero. Suppose that there exists a $\ell$ such that for all $\ell'\in\pi^{-1}(\ell)$, $\dim(\beta_{\ell'})=1$, then the product of all such $\beta_{\ell'}$ must be 1 dimensional. According to Proposition~\ref{prop:orthmatrixentries}, the expectation is $0$ unless $\alpha_\ell$ is also 1 dimensional and isomorphic to the product of $\beta_{\ell'}$. However, for an $\alpha_{l}\notin\widehat{G/H_S}$, if $\beta_{\ell'}\in\widehat{G/H_S}$ for all $\ell'$, then the product of $\beta_{\ell'}$ belongs to $\widehat{G/H_S}$ and is not isomorphic to $\alpha_\ell$. Therefore, for the expectation to be non-zero, there exists some $\ell'$ such that $\beta_{\ell'}\notin\widehat{G/H_S}$, indicating $s(\alpha)\subseteq s(\beta)$.
            
		Define a function $F_\beta^{ki}(\V x^{-1}):=\beta(\V x\circ \pi)_{ki}$, and note that
		\begin{equation}
			\label{eq:sum_one}
			\sum_{i} \|F_\beta^{ki}\|_2^2 = \sum_{i}  \E_{\V x}[ |{\beta}(\V x^{-1}\circ \pi)_{ki}|^2] = \E_{\V x} \left[\sum_{i}  |{\beta}(\V b^{-1}\circ \pi)_{ki}|^2\right] = 1,
		\end{equation}
		where the last equality follows from the fact that the sum expression is exactly the norm of the $k$-th row of the representation $\beta$. Since $\beta(.)$ is unitary, the norm of its row is always $1$.
		Using the function $F_\beta^{ki}$, we further simplify the expectation $  \E_{\V x}\left[\alpha(\V x^{-1})_{qp}\cdot\beta(\V x\circ \pi)_{ki}\right]$ as follows. 
		\begin{align*}
			\E_{\V x}\left[\alpha(\V x^{-1})_{qp}\cdot\beta(\V x\circ \pi)_{ki}\right]&=\E_{\V x}[\alpha(\V x^{-1})_{qp}F_{\beta}^{ki}(\V x^{-1})]\\&=\E_{\V x}\left[\alpha(\V x^{-1})_{qp}\sum_{\gamma}\dim(\gamma)\trace(\hat{F_{\beta}^{ki}}(\gamma)\gamma(\V x))\right]\\
			&=\E_{\V x}\left[\sum_{\gamma}\dim(\gamma)\sum_{p',q'}\hat{F_{\beta}^{ki}}(\gamma)_{p'q'}\gamma(\V x)_{q'p'}\alpha(\V x^{-1})_{qp}\right]\\
			&=\sum_{\gamma}\dim(\gamma)\sum_{p',q'}\hat{F_{\beta}^{ki}}(\gamma)_{p'q'}\E_{\V x}\left[\gamma(\V x)_{q'p'}\alpha(\V x^{-1})_{qp}\right].
		\end{align*}
		By Proposition~\ref{prop:orthmatrixentries}, the expectation is $0$ unless $\alpha=\gamma$,  $p'=q$ and $q'=p$. Hence, we have
		$$\E_{\V x}\left[\alpha(\V x^{-1})_{qp}\cdot\beta(\V x\circ \pi)_{ki}\right]=\hat{F_{\beta}^{ki}}(\alpha)_{pq}.$$
		Thus, we can express $\termm{e}{\alpha}{\beta}$ as,
		$$\termm{e}{\alpha}{\beta}=\dim(\alpha)\dim(\beta)\sum_{p,q,i,k}\hat{f}_v(\alpha)_{pq}\cdot \E_{\V s}\left[(\hat{g}_u(\beta)\hat{h^{\V s}_u}(\beta))_{ik}\right]\cdot \hat{F_{\beta}^{ki}}(\alpha)_{pq}.$$
		In addition, Lemma~\ref{lemma:folding} indicates that $s(\alpha)$ and $s(\beta)$ are non-empty. Therefore,
		\begin{align*}
			\E_{\V x, \V s}[f_v(\V x)\cdot (g \conv h^{\V s})((\V x \circ \pi)^{-1})]&=\sum_{\substack{\alpha,\beta\\|s(\alpha)|,|s(\beta)|\neq0\\s(\alpha)\subseteq \pi(s(\beta))}}\termm{e}{\alpha}{\beta}\\
			&=\underbrace{\sum_{\substack{\alpha,\beta\\|s(\alpha)|,|s(\beta)|\neq 0\\|s(\beta)|<C\\s(\alpha)\subseteq \pi(s(\beta))}}\termm{e}{\alpha}{\beta}}_{\Theta^{e(u,v)}_{\mathrm{Low}}}+\underbrace{\sum_{\substack{\alpha,\beta\\|s(\alpha)|,|s(\beta)|\neq 0\\|s(\beta)|\geq C\\s(\alpha)\subseteq \pi(s(\beta))}}\termm{e}{\alpha}{\beta}}_{\Theta^{e(u,v)}_{\mathrm{High}}},
		\end{align*}
		where in the last expression we break the summation based on $|s(\beta)|$.
		
		If~(\ref{eq:soundness_decode}) is not true, then we have
		$$\left|\E_{e(u,v)} \left[\Theta^{e(u,v)}_{\mathrm{Low}} + \Theta^{e(u,v)}_{\mathrm{High}} \right]\right| \geq \delta'.$$
		where $\delta' = \frac{\delta}{2|G|}$. We  will later show that that $\left|\E_{e(u,v)} \left[ \Theta^{e(u,v)}_{\mathrm{High}} \right]\right| \leq \delta'/2$. Assuming this, we have
		$$\left|\E_{e(u,v)} \left[ \Theta^{e(u,v)}_{\mathrm{Low}} \right]\right| \geq \delta'/2.$$
		We now show how to come up with a decoding strategy based on the above lower bound.

		\paragraph{Bounding the $\Theta^{e(u,v)}_{\mathrm{Low}}$ term.}Next, we argue that if $|\E_{e}[\Theta^{e(u,v)}_{\mathrm{Low}}]|$ is large, then we can decode an assignment to the \LC\ instance $\mathcal{H}$. We first simplify the expression,
		\begin{align*}
			|\Theta^{e(u,v)}_{\mathrm{Low}}|^2&=\left|\E_{\V s}\left[ \sum_{\alpha,\beta}\dim(\alpha)\dim(\beta)\sum_{p,q,i,k}\hat{f}_v(\alpha)_{pq}\cdot (\hat{g}_u(\beta)\hat{h^{\V s}_u}(\beta))_{ik}\cdot \hat{F_{\beta}^{ki}}(\alpha)_{pq}\right]\right|^2\\
			&=\left|\E_{\V s}\left[\sum_{\alpha,\beta}\dim(\alpha)\dim(\beta)\sum_{p,q,i,j,k}\hat{f}_v(\alpha)_{pq}\cdot \hat{g}_u(\beta)_{ij}\cdot\hat{h^{\V s}_u}(\beta)_{jk}\cdot \hat{F_{\beta}^{ki}}(\alpha)_{pq}\right]\right|^2\\
			&\leq \left(\sum_{\alpha,\beta}\dim(\alpha)\dim(\beta)\sum_{\substack{p,q\\i,j,k}}|\hat{f}_v(\alpha)_{pq}|^2|\hat{g}_u(\beta)_{ij}|^2\right)\left(\sum_{\alpha,\beta}\dim(\alpha)\dim(\beta)\sum_{\substack{p,q\\i,j,k}}|\hat{F}_\beta^{ki}(\alpha)_{pq}|^2 \E_{\V s}\left[\hat{h}_u^{\V s}(\beta)_{jk}\right]^2\right)
		\end{align*}
		The second term is bounded by $1$ as
		\begin{align*}
			&\sum_{\alpha,\beta}\dim(\alpha)\dim(\beta)\sum_{\substack{p,q\\i,j,k}}|\hat{F}_\beta^{ki}(\alpha)_{pq}|^2\E_{\V s}\left[\hat{h}_u^{\V s}(\beta)_{jk}\right]^2\\
			&\leq\E_{\V s}\left[ \sum_{\beta}\dim(\beta)\sum_{j,k}|\hat{h}_u^{\V s}(\beta)_{jk}|^2\sum_{i}\sum_{\alpha}\dim(\alpha    )\sum_{p,q}|\hat{F}_\beta^{ki}(\alpha)_{pq}|^2\right]\\
			&=\E_{\V s}\left[ \sum_{\beta}\dim(\beta)\sum_{j,k}|\hat{h}_u^{\V s}(\beta)_{jk}|^2\sum_{i}\sum_{\alpha}\dim(\alpha    )\hsnorm{\hat{F}_\beta^{ki}(\alpha)}^2\right]\\
			&=\E_{\V s}\left[\sum_{\beta}\dim(\beta)\sum_{j,k}|\hat{h}_u^{\V s}(\beta)_{jk}|^2\sum_{i}\|F_\beta^{ki}\|^2\right]\\
			&= \E_{\V s}\left[\sum_{\beta}\dim(\beta)\sum_{j,k}|\hat{h}_u^{\V s}(\beta)_{jk}|^2\right] \tag*{(Using Equation~(\ref{eq:sum_one})}\\
			&\leq \E_{\V s} \left[\pnorm{h_u^{\V s}}{2}^2 \right]= 1,
		\end{align*}
		Based on the above bound, the term $  |\Theta^{e(u,v)}_{\mathrm{Low}}|^2$ is upper bounded by
		\begin{align*}
			|\Theta^{e(u,v)}_{\mathrm{Low}}|^2&\leq \sum_{\substack{\alpha,\beta\\|s(\alpha)|,|s(\beta)|\neq 0\\|s(\beta)|<C\\s(\alpha)\subseteq \pi(s(\beta))}}\dim(\alpha)\dim(\beta)\sum_{\substack{p,q\\i,j,k}}|\hat{f}_v(\alpha)_{pq}|^2|\hat{g}_u(\beta)_{ij}|^2
		\end{align*}
		Since for $\beta$ such that $|s(\beta)| \leq C$, $\dim(\beta)=\prod_{i=1}^L \dim(\beta_i)=\prod_{i,\dim(\beta_i)\geq 2}\dim(\beta_i)\leq(\sqrt{|G|})^{C}$ and the index $i$ varies over the dimension of $\beta$,
		\begin{align*}
			|\Theta^{e(u,v)}_{\mathrm{Low}}|^2&\leq|G|^{\frac{C}{2}}\sum_{\substack{\alpha,\beta\\|s(\alpha)|,|s(\beta)|\neq 0\\|s(\beta)|<C\\s(\alpha)\subseteq \pi(s(\beta))}}\dim(\alpha)\dim(\beta)\sum_{\substack{p,q\\i,j}}|\hat{f}_v(\alpha)_{pq}|^2|\hat{g}_u(\beta)_{ij}|^2\\
			&=|G|^{\frac{C}{2}}\sum_{\substack{\alpha,\beta\\|s(\alpha)|,|s(\beta)|\neq 0\\|s(\beta)|<C\\s(\alpha)\subseteq \pi(s(\beta))}}\dim(\alpha)\dim(\beta)\hsnorm{\hat{f}_v(\alpha)}^2\hsnorm{\hat{g}_u(\beta)}^2.
		\end{align*}
		
		Now, we can present the decoding strategy for a typical edge $e=(u,v)$:
		\paragraph{Decoding strategy.} 
		\begin{enumerate}
			\item For each $u\in U$, consider a function $g_{u}(\V x)=\chi_\rho(A_u(\V x))$, sample a $\beta$ with probability $\dim(\alpha)\hsnorm{\hat{g}_{u}(\beta)}^2$ and select a random coordinate $j$ s.t. $\chi_{\beta_j}\notin\widehat{G/H_S}$. If there is no such $j$, then return $\bot$.
			\item For each $v\in V$, consider a function $f_{v}(\V x)=\chi_\rho(A_v(\V x))$, sample an $\alpha$ with probability $\dim(\alpha)\hsnorm{f_{v}(\alpha)}^2$ and select a random coordinate $i$ s.t. $\chi_{\alpha_i}\notin \widehat{G/H_S}$. If there is no such $i$, then return $\bot$.
		\end{enumerate}
		For $\alpha, \beta$ such that $s(\alpha)$, $s(\beta)$ are nonempty and $s(\alpha)\subseteq \pi(s(\beta))$, the strategy will succeed with probability at least $1/|s(\beta)|$. This is because for any label $\ell$ returned by player $v$, the condition $s(\alpha)\subseteq \pi(s(\beta))$ guarantees that there exists a $\ell'\in\pi^{-1}_e(\ell)$ such that $\beta_{\ell'}\notin \widehat{G/H_S}$, and the player $u$ returns this $\ell'$ with probability $1/|s(\beta)|$. Therefore, the expected value of the labeling returned by the strategy is given by
		\begin{align*}
			&\E_{e(u,v)}\left[\sum_{\substack{\alpha,\beta\\|s(\alpha)|,|s(\beta)|\neq 0\\|s(\beta)|<C\\s(\alpha)\subseteq \pi(s(\beta))}}\dim(\alpha)\dim(\beta)\hsnorm{\hat{f}_v(\alpha)}^2\hsnorm{\hat{g}_u(\beta)}^2\frac{1}{|s(\beta)|}\right]\\
			&\geq \frac{1}{C}\E_{e(u,v)}\left[\sum_{\substack{\alpha,\beta\\|s(\alpha)|,|s(\beta)|\neq 0\\|s(\beta)|<C\\s(\alpha)\subseteq \pi(s(\beta))}}\dim(\alpha)\dim(\beta)\hsnorm{\hat{f}_v(\alpha)}^2\hsnorm{\hat{g}_u(\beta)}^2\right]\\
			&\geq \frac{1}{C\cdot|G|^\frac{C}{2}}\E_{e(u,v)}\left[|\Theta^{e(u,v)}_{\mathrm{Low}}|^2\right]\\
			&\geq \frac{1}{C\cdot|G|^\frac{C}{2}}  \left|\E_{e(u,v)}\left[\Theta^{e(u,v)}_{\mathrm{Low}}\right]\right|^2     \tag*{(Cauchy-Schwarz inequality)}\\
			&\geq \frac{\delta'^2}{4C|G|^{C/2}}.
		\end{align*}
		% We will set the value of $C$\abnote{Set $C$ such that the above is  $>2^{-r}$ to arrive at a contradiction} so that we get the required upper bound of $\left|\E_{e(u,v)} \left[ \Theta^{e(u,v)}_{\mathrm{High}} \right]\right| \leq \delta'/2$.
        We set $C=\Omega_G(r-2\log\delta')$ so that $\frac{\delta'^2}{4C|G|^{C/2}}> O(2^{-r})$, which contradicts that hardness of \LC. Therefore, $|\E_{e(u,v)}[ \Theta^{e(u,v)}_{\mathrm{Low}}]|\leq \delta'/4$ for any two layers in $\mathcal{H}$, and furthermore

        $$\left|\E_{1\leq t < t'\leq T}\left[\E_{\pi_{u,v}\in \Pi_{t,t'}} \left[ \Theta^{e(u,v)}_{\mathrm{Low}} \right]\right]\right| \leq \frac{\delta'}{4}.$$
		
		\paragraph{Bounding the $\Theta^{e(u,v)}_{\mathrm{{High}}}$ term.} It remains to bound $|\E_{e}[\Theta^{e}_{\mathrm{{High}}}]|$. We divide the term into two parts for $D$ such that $C\geq \Omega_{G}\left(\log{D}-\log{\delta'}\right)$.
		\begin{align*}
				\Theta^{e(u,v)}_{\mathrm{High}} &= \sum_{\substack{\alpha,\beta\\|s(\alpha)|,|s(\beta)|\neq 0\\|s(\beta)|\geq C\\s(\alpha)\subseteq \pi(s(\beta))}}\termm{e}{\alpha}{\beta}\\
				& = \sum_{\substack{\alpha,\beta\\|s(\alpha)|,|s(\beta)|\neq 0\\|s(\beta)|\geq C\\s(\alpha)\subseteq \pi(s(\beta)) \\ \dim(\beta)\leq D}}\termm{e}{\alpha}{\beta} +  \sum_{\substack{\alpha,\beta\\|s(\alpha)|,|s(\beta)|\neq 0\\|s(\beta)|\geq C\\s(\alpha)\subseteq \pi(s(\beta))\\ \dim(\beta)> D}}\termm{e}{\alpha}{\beta},
		\end{align*}
	We denote the first term by $\Theta^{e(u,v)}_{\mathrm{High}, \leq D}$ and the second term by $	\Theta^{e(u,v)}_{\mathrm{High}, >D}$.  We bound these terms separately,

	\paragraph{Bounding $\Theta^{e(u,v)}_{\mathrm{High}, \leq D}$.} Starting with the simplified expression for $\termm{e}{\alpha}{\beta}$, we have
	\begin{align*}
		\Theta^{e(u,v)}_{\mathrm{High}, \leq D} = \sum_{\alpha, \beta}\dim(\alpha)\dim(\beta)\sum_{p,q,i,k}\hat{f}_v(\alpha)_{pq}\cdot \E_{\V s}\left[(\hat{g}_u(\beta)\hat{h^{\V s}_u}(\beta))_{ik}\right]\cdot \hat{F_{\beta}^{ki}}(\alpha)_{pq}.
	\end{align*} 
We now simplify the expectation over ${\V s}$. Recall that the function $h_u^{\V s}(\V x) := \chi_\rho(A_u(\V x\gp \V s)) = g_u(\V x\gp \V s)$. We now express the Fourier coefficient of $h_u^{\V s}$ in terms of the Fourier coefficient of $g_u^{\V s}$. By the Fourier inversion formula,
\begin{align*}
	h_u^{\V s} (\V x)  = g_u(\V x\gp \V s) &= \sum_\beta \dim(\beta) \trace(\hat{g}_u(\beta) \beta(\V x\gp \V s)^\star)\\
     &= \sum_\beta \dim(\beta) \trace(\hat{g}_u(\beta) (\beta(\V x)\beta(\V s))^\star) \tag*{(\text{Using homomorphism of $\beta$})}\\
  &= \sum_\beta \dim(\beta) \trace(\hat{g}_u(\beta) \beta(\V s)^\star\beta(\V x)^\star)
\end{align*}
As the Fourier expansion is unique, we have $\hat{h}_u^{\V s}(\beta) = \hat{g}_u(\beta) \beta(\V s)^\star$. Using this, we have
\begin{align*}
	\E_{\V s}\left[(\hat{g}_u(\beta)\hat{h^{\V s}_u}(\beta))_{ik}\right] &= \E_{\V s}\left[(\hat{g}_u(\beta) \hat{g}_u(\beta) \beta( \V s^{-1}))_{ik}\right]\\
	&= \E_{\V s}\left[\sum_{j, j'} \hat{g}_u(\beta)_{ij}\hat{g}_u(\beta)_{jj'} \beta( \V s^{-1})_{j'k}\right]\\
    	&= \sum_{j, j'} \hat{g}_u(\beta)_{ij}\hat{g}_u(\beta)_{jj'}\E_{\V s}\left[ \beta( \V s^{-1})_{j'k}\right]\\
        &= \sum_{j, j'} \hat{g}_u(\beta)_{ij}\hat{g}_u(\beta)_{jj'}\E_{\V s}\left[ \prod_{\ell = 1}^L \beta_\ell(s_\ell^{-1})_{j'_\ell,k_\ell}\right]\\        
        &= \sum_{j, j'} \hat{g}_u(\beta)_{ij}\hat{g}_u(\beta)_{jj'}\prod_{\ell = 1}^L \E_{\V s_\ell}\left[ \beta_\ell(s_\ell^{-1})_{j'_\ell,k_\ell}\right]
\end{align*}
Now, we are in the setting when $\dim(\beta)\leq D$ but $s(\beta)\geq C$. Hence, the number of dimension $\geq 2$ representations in $(\beta_1, \beta_2, \ldots, \beta_L)$ is upper bounded by $\log_2 D$. Thus, there are at least $C-\log_2 D$ coordinates $\ell \in [L]$ such that $\beta_\ell \notin  \widehat{G/H_S}$ and $\dim(\beta_\ell)=1$. For each such $\beta_\ell$ we can apply the following claim.

\begin{claim}
\label{claim:main_S_noise_claim}
    There exists some constant $\eps_G>0$ that depends only on $|G|$ such that $|\E_{s\in S}[\beta(s^{-1})]| \leq 1-\eps_G$ for any $\beta \notin\widehat{G/H_S}$ such that $\dim(\beta) = 1$.
\end{claim}
\begin{proof}
    If there exists an 1-dimensional $\beta'\notin\widehat{G/H_S}$ such that $|\E_{s\in S}[\beta'(s)]|=1$, then for all $s\in S$ $\beta'$ must satisfy $\beta'(s)=c$ for a constant $c$. Consider the subgroup of $\widehat{G/[G,G]}$ generated by $\widehat{G/H_S}\cup\{\beta'\}$. This subgroup is isomorphic to $\widehat{G/Q}$ for some $Q\trianglelefteq G$. According to the duality of $G/[G,G]$, we can recover this subgroup $Q$ as
    $$\{g\in G/[G,G]~|~\beta(g)=1,\forall \beta\in\widehat{G/Q}\}.$$
    A character in $\widehat{G/Q}$ will map $S$ to a constant in $\langle c\rangle\subseteq\widehat{G/Q}$. By the definition of quotient group, $\beta(gQ)=\beta(g)$ for all $\beta\in\widehat{G/Q}$, which is also a constant in $\widehat{G/Q}$. Thus, $S$ must be a subset of some coset of $Q$, which indicates $S\subset G/Q$. Meanwhile, $|\widehat{G/Q}|>|\widehat{G/H_S}|$ as $\widehat{G/Q}$ includes more elements. Then we have 
    $$|G/Q|>|G/H_S|,$$
    which contradicts the fact that $H_S$ is the smallest desired subgroup. Consequently, we conclude the claim.

    In addition, since $\beta$ must map $S$ to at least two distinct complex numbers, 
    $$\left|\E_{s\in S}[\beta(s)]\right|\leq \left|\frac{|G|-1}{|G|}+\frac{1}{|G|}e^{\frac{2\pi i}{|G|}}\right|\leq 1-\eps_G,$$
    for some constant $\eps_G$ only depending on $|G|$.
\end{proof}
Using the above claim, we have
\begin{align*}
	\E_{\V s}\left[(\hat{g}_u(\beta)\hat{h^{\V s}_u}(\beta))_{ik}\right] &= \sum_{j, j'} \hat{g}_u(\beta)_{ij}\hat{g}_u(\beta)_{jj'}\prod_{\ell = 1}^L \E_{\V s_\ell}\left[ \beta_\ell(s_\ell^{-1})_{j'_\ell,k_\ell}\right]\\
	 &\leq (1-\eps_G)^{(C-\log_2 D)} \sum_{j, j'} |\hat{g}_u(\beta)_{ij} |\cdot | \hat{g}_u(\beta)_{jj'}|.
\end{align*}
Plugging this upper bound, we get
	\begin{align*}
	\Theta^{e(u,v)}_{\mathrm{High}, \leq D} &= \sum_{\alpha, \beta}\dim(\alpha)\dim(\beta)\sum_{p,q,i,k}\hat{f}_v(\alpha)_{pq}\cdot \E_{\V s}\left[(\hat{g}_u(\beta)\hat{h^{\V s}_u}(\beta))_{ik}\right]\cdot \hat{F_{\beta}^{ki}}(\alpha)_{pq}\\
	&\leq  (1-\eps_G)^{(C-\log_2 D)} \sum_{\alpha, \beta}\dim(\alpha)\dim(\beta)\sum_{\substack{p,q,i,k\\ j, j'}}|\hat{f}_v(\alpha)_{pq}|\cdot |\hat{g}_u(\beta)_{ij} |\cdot | \hat{g}_u(\beta)_{jj'}|\cdot |\hat{F_{\beta}^{ki}}(\alpha)_{pq}|
\end{align*} 
Applying the Cauchy-Schwarz inequality,
	\begin{align*}
	&|\Theta^{e(u,v)}_{\mathrm{High},\leq   D}|^2\leq (1-\eps_G)^{2(C-\log_2 D)} \left(\sum_{\alpha,\beta}\dim(\alpha)\dim(\beta)\sum_{\substack{p,q, i, k\\ j, j'}}|\hat{f}_v(\alpha)_{pq}|^2|\hat{g}_u(\beta)_{ij}|^2\right)\\
&\quad\quad\quad\quad\quad\quad\quad\quad\quad\quad\left(\sum_{\alpha,\beta}\dim(\alpha)\dim(\beta)\sum_{\substack{p,q, i, k\\j, j'}}|\hat{F}_\beta^{ki}(\alpha)_{pq}|^2 |\hat{g}_u(\beta)_{jj'}|^2\right).
\end{align*}
Using the fact that $i, j , j'$ and $k$ vary over the dimension of $\beta$, which is at most $D$, the first term is at most,
\begin{align*}
	&\left(\sum_{\alpha,\beta}\dim(\alpha)\dim(\beta)\sum_{\substack{p,q, i, k\\ j, j'}}|\hat{f}_v(\alpha)_{pq}|^2|\hat{g}_u(\beta)_{ij}|^2\right)\\
    &\quad\quad\quad\leq D^2 \left(\sum_{\alpha,\beta}\dim(\alpha)\dim(\beta)\sum_{\substack{p,q, i, j}}|\hat{f}_v(\alpha)_{pq}|^2|\hat{g}_u(\beta)_{ij}|^2\right)\\
	&\quad\quad\quad\leq D^2 \left(\sum_{\alpha}\dim(\alpha)\sum_{\substack{p,q}}|\hat{f}_v(\alpha)_{pq}|^2\right) \left(\sum_{\beta}\dim(\beta)\sum_{\substack{i, j}}|\hat{g}_u(\beta)_{ij}|^2\right)\\
	&\quad\quad\quad \leq D^2 \|f_v\|_2^2 \|g_u\|_2^2\leq D^2.
\end{align*}
Similarly, the second term is
\begin{align*}
	&\left(\sum_{\alpha,\beta}\dim(\alpha)\dim(\beta)\sum_{\substack{p,q, i, k\\j, j'}}|\hat{F}_\beta^{ki}(\alpha)_{pq}|^2|\hat{g}_u(\beta)_{jj'}|^2\right) \\
    &\quad\quad\quad\leq D \sum_{\beta}\dim(\beta)\sum_{j, j'}|\hat{h}_u^{\V s}(\beta)_{jj'}|^2\sum_{i}\sum_{\alpha}\dim(\alpha    )\sum_{p,q}|\hat{F}_\beta^{ki}(\alpha)_{pq}|^2\\
	&\quad\quad\quad=D \sum_{\beta}\dim(\beta)\sum_{j, j'}|\hat{g}_u(\beta)_{jj'}|^2\sum_{i}\sum_{\alpha}\dim(\alpha    )\hsnorm{\hat{F}_\beta^{ki}(\alpha)}^2\\
	&\quad\quad\quad=D \sum_{\beta}\dim(\beta)\sum_{jj'}|\hat{g}_u(\beta)_{jj'}|^2\sum_{i}\|F_\beta^{ki}\|^2\\
	&\quad\quad\quad=D \sum_{\beta}\dim(\beta)\sum_{j,j'}|\hat{g}_u(\beta)_{jj'}|^2 \tag*{(Using Equation~(\ref{eq:sum_one})}\\
	&\quad\quad\quad\leq D \cdot \pnorm{g_u}{2}^2 = D,
\end{align*}
Therefore, we have,
$$\left|\E_{e(u,v)}\left[\Theta^{e(u,v)}_{\mathrm{High},\leq   D} \right]\right|^2 \leq \E_{e(u,v)}\left[|\Theta^{e(u,v)}_{\mathrm{High},\leq   D}|^2 \right]\leq  (1-\eps_G)^{2(C-\log_2 D)} D^3.$$
We verify that a setting of $D$ satisfies $(1-\eps_G)^{2(C-\log_2 D)} D^3 \leq \frac{\delta'^2}{16}$:
\begin{align*}
    (1-\eps_G)^{2(C-\log_2 D)} D^3&\leq\frac{\delta'^2}{16}\\
    2(C-\log_2 D)\log{(1-\eps_G)}&\leq2\log{\delta'}-4-3\log D\\
    \implies  C&\geq \Omega_{G}\left(\log{D}-\log{\delta'}\right).
\end{align*}

	\paragraph{Bounding $\Theta^{e(u,v)}_{\mathrm{High}, > D}$.} 
    
 We use the proof of the ~\cite[Claim 4.5]{BhangaleK21} to prove the following.

    \begin{claim}
    \label{claim:large_dimension}
    For every edge $e=(u,v)$, and $C = \Omega_G(\log (1/\delta'))$, we have 
		$$\left|\E_{\V s}\left[\Theta^{e(u,v)}_{\mathrm{High}, > D}\right] \right| \leq \frac{\delta'}{4}.$$
	\end{claim}
	\begin{proof}
		First, using the proof of~\cite[Claim 4.5]{BhangaleK21} and using the fact that $\hat{h}_u^{\V s}(\beta) = \hat{g}_u(\beta) \beta(\V s)^\star$, we have the following upper bound.
        \begin{align*}
            \left|\E_{\V s}[\Theta^{e(u,v)}_{\mathrm{High}, > D}] \right| &\leq \|f_v\|_2^2 \|g_u\|_2^2 \sum_{\beta,  |s(\beta)|\geq C}\dim(\beta) \hsnorm{\hat{g}_u(\beta) \cdot \E_{\V s}[\beta(\V s)^\star]}^2\\
            & = \sum_{\beta,  |s(\beta)|\geq C}\dim(\beta) \hsnorm{\hat{g}_u(\beta) \cdot \E_{\V s}[\beta(\V s)^\star]}^2
        \end{align*}
        We now effectively bound the last summation by bounding $\hsnorm{\hat{g}_u(\beta) \cdot \E_{\V s}[\beta(\V s)^\star]}^2$. We have,
        \begin{align*}
           \hsnorm{\hat{g}_u(\beta) \cdot \E_{\V s}[\beta(\V s)^\star]}^2 &\leq \hsnorm{\hat{g}_u(\beta)}^2 \cdot \opnorm{\E_{\V s}[\beta(\V s)^\star]}^2,
        \end{align*}
       where the inequality follows from the following fact:
       
  \[
\hsnorm{AB}^2=\operatorname{Tr}(B^\star A^\star A B)\le \opnorm{B}^2\,\operatorname{Tr}(A^\star A)=\opnorm{B}^2\,\hsnorm{A}^2.
\]
       We now study the quantity $\opnorm{\E_{\V s}[\beta(\V s)^\star]}^2$ for $\beta$s such that $|s(\beta)|\geq C$. We have

       \begin{align*}
           \opnorm{\E_{\V s}[\beta(\V s)^\star]}^2 &=  \opnorm{\E_{(s_1, s_2, \ldots, s_L)\sim S^L}\left[\otimes_{i=1}^L\beta_i(s_i)^\star\right] }^2\\
            &=  \opnorm{\otimes_{i=1}^L\E_{s\sim S}\beta_i(s)^\star] }^2\\
           & = \prod_{i=1}^L \opnorm{\E_{s\sim S}[\beta_i(s)^\star]}^2,
       \end{align*}
       where we used the fact $\opnorm{A\otimes B} = \opnorm{A}\cdot \opnorm{B}$. As $\beta_i$s are unitary transformations, we have $\opnorm{\beta_i(s)}\leq 1$ for all $i\in [L]$ and $s\in S$ and hence $\opnorm{\E_{s\sim S}[\beta_i(s)^\star]}\leq 1$. We now show that for $i$ such that $i\in s(\beta)$, we have $\opnorm{\E_{s\sim S}[\beta_i(s)^\star]}^2\leq 1-\delta_G$, where $\delta_G>0$ only depends on $|G|$. With this, for $\beta$ such that $|s(\beta)|\geq C$, we have,
       
       $$ \opnorm{\E_{\V s}[\beta(\V s)^\star]}^2 \leq (1-\delta_G)^C.$$
Let us see why this finishes the proof of the claim.

         \begin{align*}
            \left|\E_{\V s}[\Theta^{e(u,v)}_{\mathrm{High}, > D}] \right| &\leq \sum_{\beta, |s(\beta)|\geq C}\dim(\beta) \hsnorm{\hat{g}_u(\beta)}^2 \cdot \opnorm{\E_{\V s}[\beta(\V s)^\star]}^2\\
            & \leq \sum_{\beta, |s(\beta)|\geq C}\dim(\beta) \hsnorm{\hat{g}_u(\beta)}^2 \cdot (1-\delta_G)^C\\
            & = (1-\delta_G)^C \sum_{\beta, |s(\beta)|\geq C}\dim(\beta) \hsnorm{\hat{g}_u(\beta)}^2 \\
            & \leq (1-\delta_G)^C \|g_u\|_2^2 = (1-\delta_G)^C \leq \frac{\delta'}{4}, 
        \end{align*}
        where the last inequality follows from the choice of $C$.
    Thus, it remains to show the following claim.
    \begin{claim}
    \label{claim:highterms}
        For $i$ such that $i\in s(\beta)$, we have $\opnorm{\E_{s\sim S}[\beta_i(s)^\star]}^2\leq 1-\delta_G$ for some $\delta_G>0$.
    \end{claim}
    \begin{proof}
        % \abnote{The proof uses $S^{-1}S$ generates $H_S$ which is always true in the abelian setting. So, for now, our main theorem is true for $S$ such that $S^{-1}S$ generates $H_S$} 
        The claim for $i$ such that $\dim(\beta_i) =1$ follows from Claim~\ref{claim:main_S_noise_claim}. Therefore, we assume that $\dim(\beta_i)>1$. Define
$$
M:=\mathbb{E}_{s\in S}\big[\beta_i(s^{-1})\big]=\frac{1}{|S|}\sum_{s\in S}\beta_i(s^{-1}).
$$
Since each $\beta_i(s^{-1})$ is unitary, $M$ is an average of unitaries, and hence $\opnorm{M}\le 1$. Suppose for contradiction that $\opnorm{M}=1$, then $\opnorm{M^\star}=1$. Then there exists a unit vector $v$ such that $\|M^\star v\|_2=1$.

Write $v_s:=\beta_i(s)v$ for $s\in S$. Then $\|v_s\|_2=1$ for all $s$, and
$$
M^\star v=\frac{1}{|S|}\sum_{s\in S} v_s.
$$
By the triangle inequality,
$$
1=\|M^\star v\|_2=\left\|\frac{1}{|S|}\sum_{s\in S} v_s\right\|_2
\le \frac{1}{|S|}\sum_{s\in S}\|v_s\|_2
=1.
$$
Hence equality holds in the triangle inequality, which implies all vectors $v_s$ have the same direction; since they all have norm $1$, we get $v_s=v_t$ for all $s,t\in S$. Therefore, for all $s,t\in S$,
$$
\beta_i(s^{-1}t)v=\beta_i(s^{-1})\beta_i(t)v=v,
$$
so $v$ is fixed by every element of $S^{-1}S$, and hence by the subgroup it generates:
$$
\beta_i(h)v=v \qquad \forall\, h\in \langle S^{-1}S\rangle = H_S.
$$
In particular, $v$ is fixed by $[G,G]\subseteq H_S$. Let
$$
V^{[G,G]}:=\{w\in \mathbb{C}^d:\beta_i(c)w=w\ \forall\, c\in [G,G]\}.
$$
Because $[G,G]\lhd G$, the subspace $V^{[G,G]}$ is $G$-invariant: for $w\in V^{[G,G]}$, $g\in G$, and $c\in [G,G]$,
$$
\beta_i(c)\beta_i(g)w=\beta_i(g)\beta_i(g^{-1}cg)w=\beta_i(g)w,
$$
since $g^{-1}cg\in [G,G]$. Thus $\beta_i(g)w\in V^{[G,G]}$. We have exhibited a nonzero vector $v\in V^{[G,G]}$, so by irreducibility of $\beta_i$ we must have $V^{[G,G]}=\mathbb{C}^d$. Hence $\beta_i(c)=I$ for all $c\in [G,G]$, i.e. $[G,G]\subseteq \ker(\beta_i)$.

Therefore $\beta_i$ factors through the abelian quotient $G/[G,G]$. But every irreducible complex representation of an abelian group is $1$-dimensional, contradicting $d>1$. This contradiction shows that $\opnorm{M}\neq 1$, and since $\opnorm{M}\le 1$, we conclude $\opnorm{M}\leq 1-\delta_G$ for some $\delta_G>0$.
    \end{proof}
    This finishes the proof of the Claim~\ref{claim:large_dimension}.
	\end{proof}

\paragraph{Finishing the proof.} Using these bounds, we have,
\begin{align*}
\left|\E_{e(u,v)}\left[\Theta^{e(u,v)}_{\mathrm{High}} \right]\right| = \left|\E_{e(u,v)}\left[\Theta^{e(u,v)}_{\mathrm{High}, \leq D} + \Theta^{e(u,v)}_{\mathrm{High}, >D} \right]\right|&\leq \left|\E_{e(u,v)}\left[\Theta^{e(u,v)}_{\mathrm{High}, \leq D} \right]\right| + \left|\E_{e(u,v)}\left[\Theta^{e(u,v)}_{\mathrm{High}, >D} \right]\right|\\
&\leq \frac{\delta'}{4}+\frac{\delta'}{4} = \frac{\delta'}{2} \leq \frac{\delta}{4|G|},  
\end{align*}
as required.
\end{proof}
Therefore, the term (\ref{eq:nonconstant}) collectively can be upper bounded by $\frac{\delta}{2}$.
 
	\subsubsection{Bounding expressions in (\ref{eq:greaterdim}).} Let us simplify term~(\ref{eq:greaterdim}),
    \begin{align*}
	&\frac{1}{|G|}\sum_{\dim(\rho)\geq 2}\sum_{s\in S}\E_{\substack{\pi_{uv}\in \Pi_{t,t'}\\\V x ,\V y, \V s}}\left[ \dim(\rho)\cdot \chi_{\rho}(A_v(\V x)\gp A_{u}(\V y)\gp  A_{u}(\V z)\gp s^{-1}) \right]\\
        &\leq\sum_{\dim(\rho)\geq 2}\sum_{s\in S}\E_{\substack{\pi_{uv}\in \Pi_{t,t'}\\\V x ,\V y, \V s}}\left[ \cdot \chi_{\rho}(A_v(\V x)\gp A_{u}(\V y)\gp  A_{u}(\V z)\gp s^{-1}) \right].
    \end{align*}
    
The $\dim(\rho)\geq 2$ case also appears in the Max-E$3$-Lin proof in \cite{BhangaleK21}, but we have additional $\V s\in S^{R_t}$ and $s\in S$ terms; We focus on dealing with these terms. The expectation in the claim is
\begin{align*}
    &\E_{\pi_{uv}\in\Pi_{t,t'}}\E_{\V x,\V y,\V z}[\chi_\rho(A_v(\V x)\gp A_u(\V y)\gp A_u(\V z)\gp s^{-1})]\\
    &=\sum_{\dim(\rho)\geq 2}\sum_{s\in S}\E_{\substack{\pi_{uv}\in\Pi_{t,t'}\\\V x ,\V y, \V s}}\left[ \trace{(\rho(A_v(\V x)\gp A_{u}(\V y)\gp  A_{u}(\V z))\cdot \rho(s^{-1}))} \right].
	&\tag{$\rho$ is a homomorphism.}
	\end{align*}
    Fix an edge $e(u,v)$, a representation $\rho$ and $s\in S$, we have,
	\begin{align*}
		&\E_{\V x,\V y,\V s}\left[\trace{(\rho(A_v(\V x)\gp A_{u}(\V y)\gp  A_{u}(\V z))\cdot \rho(s^{-1}))} \right]\\
		=&\E_{\V x,\V y,\V s}\left[\trace{(\rho(A_v(\V x))\cdot \rho(A_{u}(\V y))\cdot  \rho(A_{u}(\V z))\cdot \rho(s^{-1}))} \right]\\
		=&\E_{\V x,\V y,\V s}\left[\sum_{1\leq p,q,r,w\leq \dim(\rho)}\rho(A_v(\V x))_{pq}\cdot \rho(A_{u}(\V y))_{qr}\cdot  \rho(A_{u}(\V z))_{rw}\cdot \rho(s^{-1})_{wq}\right]\\
		=&\sum_{1\leq p,q,r,w\leq \dim(\rho)}\E_{\V x,\V y,\V s}\left[\rho(A_v(\V x))_{pq}\cdot \rho(A_{u}(\V y))_{qr}\cdot  \rho(A_{u}(\V z))_{rw}\cdot \rho(s^{-1})_{wq}\right].
	\end{align*}
    Let $f_{pq}(\V x):=\rho(A_v(\V x))_{pq}$, $g_{qr}(\V y)=\rho(A_{u}(\V y))_{qr}$ and $h^{\V s}_{rw}(\V x)=\rho(A_u(\V x))_{rw}$. Fix some $p,q,r,w$, $\rho(s^{-1})_{wq}$ is an entry of $\rho(s^{-1})$ bounded by $1$. Hence, if we fix a vector $\V s$, for $\rho\in \irr(G)$ with $\dim(\rho)\geq 2$, 
	\begin{align*}
    &\E_{\V x,\V y}\left[\rho(A_v(\V x))_{pq}\cdot \rho(A_{u}(\V y))_{qr}\cdot  \rho(A_{u}(\V z))_{rw}\cdot \rho(s^{-1})_{wq}\right]\\
    &\leq \E_{\V x, \V y}\left[\rho(A_v(\V x))_{pq}\cdot \rho(A_{u}(\V y))_{qr}\cdot  \rho(A_{u}(\V z))_{rw}\right]\\
    &=\E_{\V x, \V y} \left[f_{pq}(\V x)\cdot g_{qr}(\V y) \cdot h_{rw}^{\V s}(\V y^{-1}\gp\V (\V x\circ \pi)^{-1}\V)\right]\\
    &=\E_{\V x}[f_{pq}(\V x)\cdot (g_{qr}* h^{\V s}_{rw})((\V x\circ \pi)^{-1}))]\\
    &=\E_{\V x}\left[\left(\sum_{\beta}\dim(\beta)\trace{(\hat{g}(\beta)\hat{h}(\beta)\beta(\V x \circ\pi))}\right)\left(\sum_{\alpha}\dim(\alpha)\trace{(\hat{f}(\alpha)\alpha(\V x^{-1}))}\right)\right]\\
    &=\sum_{\substack{\alpha,\beta\\\dim(\alpha),\dim(\beta)\geq 2}}\dim(\alpha)\dim(\beta)\E_{\V x}\left[\trace{(\hat{g}(\beta)\hat{h}(\beta)\beta(\V x \circ\pi))}\cdot \trace{(\hat{f}(\alpha)\alpha(\V x^{-1}))} \right].
\end{align*}
The last steps follow Lemma~\ref{lemma:folded_zero_coeff}, which states $\hat{f}(\alpha)=0$ for any $\dim(\alpha)=1$ if $f(\V x)=\rho(A_{v}(\V x))$ is folded and $\dim(\rho)\geq 2$. Similar for functions $g(\V x)$ and $h(\V x)$. 

Again, we can rewrite this summation as
\begin{align*}
    &\sum_{\substack{\alpha,\beta\\\dim(\alpha),\dim(\beta)\geq 2}}\dim(\alpha)\dim(\beta)\E_{\V x}\left[\trace{(\hat{g}(\beta)\hat{h}(\beta)\beta(\V x \circ\pi))}\cdot \trace{(\hat{f}(\alpha)\alpha(\V x^{-1}))} \right]\\
    &=\underbrace{\sum_{\substack{\alpha,\beta\\\dim(\alpha),\dim(\beta)\geq 2\\\dim_{\geq 2}(\beta)<K}}\dim(\alpha)\dim(\beta)\E_{\V x}\left[\trace{(\hat{g}(\beta)\hat{h}(\beta)\beta(\V x \circ\pi))}\cdot \trace{(\hat{f}(\alpha)\alpha(\V x^{-1}))} \right]}_{\Gamma_{\mathrm{low}}^{p,q,r,w}}\\
    &+\underbrace{\sum_{\substack{\alpha,\beta\\\dim(\alpha),\dim(\beta)\geq 2\\\dim_{\geq 2}(\beta)\geq K}}\dim(\alpha)\dim(\beta)\E_{\V x}\left[\trace{(\hat{g}(\beta)\hat{h}(\beta)\beta(\V x \circ\pi))}\cdot \trace{(\hat{f}(\alpha)\alpha(\V x^{-1}))} \right]}_{\Gamma_{\mathrm{high}}^{p,q,r,w}},
\end{align*}
where $\dim_{\geq 2}(\beta)$ denotes the number of representations in $\beta=(\rho_1,\rho_2,\dots,\rho_L)$ which are of dimension at least $2$. $K$ is a constant adopted from~\cite{BhangaleK21new} to divide the low-degree and high-degree terms, which uses different settings regarding the \LLC. We have the following lemma:
    \begin{lemma}
    \label{lemma:highterm_bound}
    Letting $K:=\frac{8|G|^{10}}{\delta^2}$. If $\mathcal{H}$ is at most $\min(\frac{\delta^2}{10|G|^{10K}}, \frac{\delta^2}{2|G|^{20}})$-satisfiable, then 

    $$\left|\E_{1\leq t<t'\leq T}\left[\E_{\pi_{uv}\in\Pi_{t,t'}}\left[\E_{\V x,\V y,\V z}[\chi_\rho(A_v(\V x)\gp A_u(\V y)\gp A_u(\V z)\gp s^{-1})]\right]\right]\right|\leq \frac{\delta}{|G|^{3}}.$$
    \end{lemma}
   
\begin{proof}

Now, for a fixed $\V s$, $h^{\V s}$ is still a folded function as $h(c\V x)=g(c\V x\gp \V s^{-1})=c\cdot g(\V x\gp\V s^{-1})=c\cdot h(\V x)$. Therefore, we can directly apply the following two lemmas from~\cite{BhangaleK21new} to bound $\Gamma_{\high}$ and $\Gamma_{\low}$.
\begin{lemma}(\cite[Claim 4.4]{BhangaleK21new})
If If the \LLC\ instance $\mathcal{H}$ is at most $\frac{\delta^2}{10|G|^{10K}}$-satisfiable, then for every $1\le t< t'\le T$, every $\rho\in \irr(G)$ such that $\dim(\rho)\geq 2$ and every $1\leq p,q,r,w\leq \dim(\rho)$,
	$$\left| \E_{\pi_{uv}\in\Pi_{t,t'}}[\Gamma^{p,q,r,w}_{\low}] \right|\leq \frac{\delta}{2|G|^{5}}.$$
\end{lemma}

This lemma guarantees that the expectation over the low-degree term $\Gamma_{\mathrm{low}}^{p,q,r,w}$ is negligible between any two layers. Therefore,

$$\left| \E_{1\leq t< t'\leq T}\left[\E_{\pi_{uv}\in \Pi_{t,t'}}[\Gamma^{p,q,r,w}_\low]\right] \right|\leq \frac{\delta}{2|G|^5}.$$

\begin{lemma}(\cite[Claim 4.5]{BhangaleK21new})
\label{lemma:bk_new_high}
    If the \LLC\ instance $\mathcal{H}$ is at most $\frac{\delta^2}{(2|G|)^{20}}$-satisfiable, then for every $\rho\in \irr(G)$ such that $\dim(\rho)\geq 2$ and for every $1\leq p,q,r,w\leq \dim(\rho)$,
	$$\left| \E_{1\leq t< t'\leq T}\left[\E_{\pi_{uv}\in \Pi_{t,t'}}[\Gamma^{p,q,r,w}_\high]\right] \right|\leq \frac{\delta}{2|G|^5}.$$
\end{lemma}

Therefore, the term in Lemma~\ref{lemma:highterm_bound} is at most
$$\sum_{p,q,r,w\leq \dim(\rho)}\frac{\delta}{|G|^5}\leq\sum_{p,q,r,w\leq\sqrt{|G|}} \frac{\delta}{|G|^5} =\frac{\delta}{|G|^3},$$
which finishes the proof.
\end{proof}
Thus, term~(\ref{eq:greaterdim}) is collectively upper bounded by is at most $\frac{\delta}{2}$. Combining this with the bound of $\frac{\delta}{2}$ on term~(\ref{eq:nonconstant}), if the \LLC\ instance is at most $2^{-r}$-satisfiable, then $\textrm{value}(A)\leq |S|/|H_S|+\delta$. This proves Lemma~\ref{lemma:main_soundness}.
\end{proof}
\paragraph{Acknowledgments.} We thank anonymous reviewers for providing useful feedback to improve the presentation of the paper.

\bibliographystyle{alpha}
\bibliography{refs}

\end{document}